\definecolor{darkpurple}{RGB}{75,0,130}
\newcommand{\revisedA}[1]{#1}
\newtheorem{theorem}{Theorem}
\newtheorem{proposition}[theorem]{Proposition}
\newtheorem{corollary}[theorem]{Corollary}
\newtheorem{example}{Example}
\newtheorem{definition}[theorem]{Definition}
\newtheorem{problem}[theorem]{Problem}
\newcommand{\Q}{\mathbb{Q}}
\newcommand{\N}{\mathbb{N}}
\newcommand{\minatt}{\blacktriangleleft}
\newcommand{\maxatt}{\blacktriangleright}
\newcommand{\minmaxatt}{\minatt\!\maxatt}
\newcommand{\tleft}{\mathsf{left}}
\newcommand{\tright}{\mathsf{right}}
\newcommand{\tup}{\mathsf{up}}
\newcommand{\tdown}{\mathsf{down}}
\newcommand{\emb}{\mathsf{emb}}
\newcommand{\enc}{\mathsf{enc}}
\newcommand{\bin}{\mathsf{bin}}
\newcommand{\score}{\mathrm{S}}
\newcommand{\OMIT}[1]{}
\newcommand{\opbf}[1]{\textnormal{\textbf{#1}}}
\newcommand{\past}{\opbf{P}}
\newcommand{\future}{\opbf{F}}
\newcommand{\history}{\opbf{H}}
\newcommand{\yesterday}{\opbf{Y}}
\newcommand{\since}{\mathbin{\opbf{S}}}
\newcommand{\until}{\mathbin{\opbf{U}}}
\newcommand{\defeq}{\mathrel{\stackrel{\textnormal{\tiny def}}{=}}}
\newcommand{\indicator}[1]{\mathbf{1}\!\left[#1\right]}
\let\poly\undefined
\DeclareMathOperator{\poly}{poly}
\newcommand{\defn}[1]{\textbf{#1}}
\newcommand{\matA}{\boldsymbol{A}}
\newcommand{\matB}{\boldsymbol{B}}
\newcommand{\matC}{\boldsymbol{C}}
\newcommand{\vecb}{\boldsymbol{b}}
\newcommand{\veca}{\boldsymbol{a}}
\newcommand{\bnfeq}{\Coloneqq}
\newcommand{\bnfdef}{\coloneqq}
\newcommand{\sym}[1]{\mathrm{#1}}
\newcommand{\str}[1]{#1}
\newcommand{\alphabet}{\Sigma}
\newcommand{\bv}{\boldsymbol{v}}
\newcommand{\bu}{\boldsymbol{u}}
\newcommand{\vsub}[2]{\bv_{#1:#2}}
\newcommand{\bword}{\mathbf{a}}
\newcommand{\bh}{\boldsymbol{h}}
\newcommand{\bhinit}{\bh_0}
\newcommand{\rnnstep}{g}
\newcommand{\rnnout}{f}
\newcommand{\trans}{\mathcal{T}}
\newcommand{\bt}{\boldsymbol{t}}
\newcommand{\brasp}{\boldsymbol{P}}
\newcommand{\Qprop}[1]{Q_{#1}}
\newcommand{\Lang}{L}
\newcommand{\classC}{\mathcal{C}}
\newcommand{\rep}{\mathcal{R}}
\newcommand{\citeposs}[1]{\citeauthor{#1}'s \citeyearpar{#1}}
\newcommand{\kstar}{^*}
\newcommand{\kplus}{^+}
\newcommand{\TM}{\mathcal{M}}
\newcommand{\TI}{\mathcal{I}}
\newcommand{\tile}{\boldsymbol{t}}
\newcommand{\tileset}{T}
\newcommand{\tilefin}{\boldsymbol{t}_{\mathit{fin}}}
\newcommand{\mydots}{%
  \ifmmode
    \mathord{.}\mkern-1mu\mathord{.}\mkern-1mu\mathord{.}%
  \else
    .\kern-0.1em.\kern-0.1em.%
  \fi
}
\newcommand{\mycdots}{%
  \ifmmode
    \mathord{\raisebox{0.45ex}{\ensuremath{\mydots}}}\mkern2mu%
  \else
    \raisebox{0.45ex}{.\kern-0.1em.\kern-0.1em.}\kern0.1em%
  \fi
}
  \renewcommand{\cdots}{\mycdots}%
  \renewcommand{\ldots}{\mydots}%
  \renewcommand{\dots}{\mydots}%
\title{Transformers are Inherently Succinct}
\author{Pascal Bergsträßer \\
	RPTU Kaiserslautern-Landau\\
	Kaiserslautern, Germany \\
	\texttt{bergstraesser@cs.uni-kl.de}
	\And
	Ryan Cotterell \\
	ETH Zürich \\
	Zurich, Switzerland \\
	\texttt{ryan.cotterell@inf.ethz.ch}
	\And
	Anthony W.\ Lin \\
	RPTU Kaiserslautern-Landau and MPI-SWS\\
	Kaiserslautern, Germany \\
	\texttt{lin@cs.uni-kl.de} \\
}
\begin{document}

\maketitle

\begin{abstract}
We study \emph{succinctness} as a measure of the expressive power of
transformers.
Succinctness---how compactly a formalism can describe a
language relative to other formalisms---is a classical notion in logic and
automata theory.
We
prove that fixed-precision transformers are remarkably succinct: they can
be exponentially more succinct than both linear temporal logic (LTL) and recurrent neural networks, and, by extension, state-space
models, and doubly exponentially more succinct than finite automata.
In other words, there exist families of languages describable by polynomial-size
transformers whose smallest equivalent LTL formula or recurrent neural network is exponentially
large, and whose smallest equivalent automaton is doubly exponentially
large. We also establish matching upper bounds, showing that any
fixed-precision transformer can be converted to an LTL formula with at
most an exponential blow-up---improving a prior doubly exponential
translation. As a consequence of this succinctness, we show that basic
verification problems for transformers, such as emptiness and equivalence,
are provably intractable: specifically, $\EXPSPACE$-complete.
%

    \OMIT{
    Recent years saw a surprising result that transformers with fixed
    precision recognize at most star-free regular languages.
    Star-free regular languages admit various standard representations
    including Linear Temporal Logic (LTL) formulas, as well as counter-free
    finite-state automata.
    In this paper, we provides evidence of the
    we show in this paper that
    transformers with fixed precision are inherently succinct compared to
    standard representations.
    More precisely, they
    can represent star-free regular languages exponentially (resp. doubly
    exponentially) more compactly than LTL formulas (resp. finite automata).
    }
\end{abstract}

\section{Introduction}
\label{sec:intro}
Transformers \citep{Vaswani17} are the dominant architecture underlying
most modern large language models. 
A substantial body of recent
theoretical work has investigated their expressive power
\citep{transformers_survey,barcelo2024logical,YangCA24,hahn20,perez,CC22,tale},
their trainability and ability to generalize to unseen strings of longer lengths
\citep{raspl,framework,CC22}, and the extent to which their behavior can
be formally verified \citep{SAL25}. A key finding of this line of work is
that transformers with finite precision---the setting most
faithful to real-world hardware---recognize various classes of subregular languages depending on the exact assumptions made \citep{YangCA24,barcelo2024logical,tale,LC25}.

Subregular languages constitute strict subclasses of the regular languages.
For instance, the subregular class of star-free languages are precisely those definable by regular expressions that replace the
Kleene star with intersection and complementation. 
The language $a^*b^*$
is star-free because it can be written as
$\overline{\overline{\emptyset} \cdot b \cdot a \cdot\overline{\emptyset}}$, whereas $(aa)^*$ is not star-free \citep{straubing-book}. 
By contrast, recurrent
neural networks (RNNs) can recognize all regular languages under a fixed precision assumption
\citep{minsky1967computation,SS95,RNN-hierarchy,svete2023recurrent}, making them strictly more expressive than
transformers as language recognizers. 
However, the strong empirical performance of transformers invites the question as to whether expressive capacity is the most revealing lens through which to compare architectures.\looseness=-1

In this paper, we propose \emph{succinctness} as an alternative lens for
understanding the expressivity of transformers. 
The succinctness of a
language $\Lang$ with respect to a class $\classC$ of language
recognizers (e.g., transformers, finite automata, and formulas in $\textsf{FO}[<]$) measures the size of the
smallest $C \in \classC$ that recognizes $\Lang$.
In other words, succinctness tells us how many symbols
are needed to describe $\Lang$ with respect to the class $\classC$. 
Succinctness is a classical notion in logic
and computer science \citep{Stockmeyer-thesis,GS04}, where it
sharpens expressive power into a complexity-theoretic refinement: rather than asking only which languages a formalism can recognize, succinctness asks how compactly each such language can be described within it. Greater succinctness comes at a price---more succinct formalisms typically have correspondingly harder decision problems, since their compact descriptions force any decision procedure to unfold a larger amount of underlying structure.
A well-known
example concerns linear temporal logic \citep[LTL;][]{pnueli-ltl}, which is
expressively equivalent to the star-free languages
\citep{libkin-book}, and, hence, also to the counter-free automata of
\citet{counter-free-book}. 
Despite this equivalence in expressive power,
LTL can be exponentially more succinct than finite automata
\citep{SistlaC85}, i.e., certain languages admit polynomial-size LTL formulas
but require exponentially larger automata. A direct consequence is that
decision problems for LTL, such as checking whether a formula recognizes a
trivial language, are provably harder than the corresponding problems for
automata \citep{SistlaC85}.\looseness=-1

This paper offers a formal result, which can be summarized as follows: transformers can describe
certain languages extremely succinctly. 
Specifically, we show that transformers can be
\emph{exponentially} more succinct than LTL and RNNs, and hence also state-space models
\citep[SSMs;][]{mamba,ssm-will}.
Moreover, they are \emph{doubly exponentially} more
succinct than finite automata. In concrete terms, there exist families of
languages describable by polynomial-size transformers that require
exponentially larger LTL formulas or RNNs, and doubly exponentially larger
automata.
We also establish matching upper bounds: we give a translation from
finite-precision transformers to LTL formulas of exponential size,
significantly improving the doubly exponential translation of
\citet{YangCA24}. 
It follows that for any fixed-precision transformer,
there is an equivalent LTL formula of exponential size and an equivalent
finite automaton of doubly exponential size.\footnote{This holds without exception: the LTL bound is the constructive translation of \zcref{prop:UHAT-LTL}, and the automaton bound is its composition with the standard exponential LTL-to-automaton conversion \citep{SistlaC85,VardiW94}; both are unconditional upper bounds on every fixed-precision transformer.}
The key technical ingredient
behind these results is showing that transformers can count from $0$ to
$2^{2^N}$---that is, implement doubly exponentially large counters---via a
subtle encoding using attention. We then prove that the resulting
languages require exponentially larger descriptions as LTL formulas or
RNNs, and doubly exponentially larger descriptions as finite automata.
A natural consequence of this succinctness is that analyzing transformers
should be computationally challenging.
And, indeed, we show that checking whether a given transformer
recognizes a trivial language, is $\EXPSPACE$-complete. Under standard
complexity-theoretic assumptions, this means that no algorithm can solve
the problem in less than double exponential time.\looseness=-1

The specific transformer model we study is the
unique-hard attention transformer (UHAT), a simple and widely used
abstraction of self-attention
\citep{YangCA24,tale,transformers_survey,Hao22,LC25,hahn20,barcelo2024logical,uhat-data}.
In particular, \citet{tale} show that expressivity bounds on
UHATs entail corresponding bounds on softmax transformers with
fixed precision.
Different results in this paper hold under different precision assumptions: the UHAT upper bounds are stated for arbitrary rational weights, whereas the corresponding RNN results assume fixed (finite) precision.
Importantly, this means our conclusions are valid in the setting that most faithfully mirrors real-world implementations---fixed precision arithmetic.\looseness=-1

\section{Preliminaries}
\label{sec:prelim}
We adopt the following notational conventions in this paper.
We write $\N \defeq \{1, 2, 3, \ldots\}$ for the natural numbers and $\N_0 \defeq \{0, 1, 2, 3, \ldots\}$ for the natural numbers including zero.
Given $N \in \N$, we define $[N] \defeq \{1, \ldots, N\}$.
Furthermore, we write $\Q$ for the rational numbers.
We denote scalars by lowercase italicized Latin letters, vectors by boldface lowercase italicized Latin letters, and matrices by boldface uppercase italicized Latin letters.
For a vector $\bv =
(v_1,\ldots,v_D)$, we write $\vsub{i}{j} \defeq (v_i,\ldots,v_j)$ for all $1
\le i \le j \le D$, and $v_i$ for its $i^{\text{th}}$ component.
An \defn{alphabet} is a finite, non-empty set $\alphabet$ of \defn{symbols}.
A \defn{word} (also called a \defn{string}) is a finite sequence of symbols $\str{\bword} = \sym{a}_1\cdots \sym{a}_N$.
We denote symbols using lowercase Latin letters and words as boldfaced lowercase Latin letters.
We write $|\str{\mathbf{a}}| = |\sym{a}_1\cdots \sym{a}_N| = N$ for the length of a word $\str{\mathbf{a}}$.
We write $\alphabet^*$ for the set of all \defn{words}---including the empty word $\varepsilon$---and $\alphabet^+ \defeq \alphabet^* \setminus \{\varepsilon\}$.
A \defn{language} is a subset $\Lang \subseteq \alphabet^*$.

We assume familiarity with basic formal language theory and complexity
theory; see \citet{Kozen} and \citet{Sipser97} for standard references.
In particular, we work with finite automata and the following complexity
classes \citep{Sipser97}:
\[
    \P \subseteq \NP \subseteq \PSPACE \subseteq \EXP \subseteq \NEXP
    \subseteq \EXPSPACE.
\]
$\P$ and $\NP$ are problems solvable by a Turing machine in polynomial and
nondeterministic polynomial time, respectively, and $\EXP$ and $\NEXP$
are their exponential-time counterparts. $\PSPACE$ and $\EXPSPACE$ are
problems solvable by a Turing machine in polynomial and exponential space, respectively.

\subsection{Linear Temporal Logic}
A formula in linear temporal logic (LTL) over an alphabet $\alphabet$
is defined by the grammar
\[
    \varphi \bnfeq \top \mid \bot \mid \Qprop{\sym{a}} \; (\sym{a} \in \alphabet) \mid
    \varphi \wedge \varphi \mid \varphi \vee \varphi \mid \neg\varphi
    \mid \varphi \since \varphi \mid \varphi \until \varphi.
\]
Satisfaction of an LTL formula $\varphi$ on a word $\str{\bword} = \sym{a}_1 \cdots \sym{a}_N \in
\alphabet^+$ at position $n \in [N]$, written $\str{\bword}, n \models \varphi$, is
defined inductively (omitting the trivial cases for $\top$ and $\bot$):
\begin{center}
\begin{tabular}{l l l}
$\str{\bword},n \models \Qprop{\sym{a}}$ & iff & $\sym{a_n} = \sym{a}$ \qquad ($\sym{a} \in \alphabet$)\\
$\str{\bword},n \models \varphi_1 \wedge \varphi_2$ & iff &
    $\str{\bword},n \models \varphi_1$ and $\str{\bword},n \models \varphi_2$\\
$\str{\bword},n \models \varphi_1 \vee \varphi_2$ & iff &
    $\str{\bword},n \models \varphi_1$ or $\str{\bword},n \models \varphi_2$\\
$\str{\bword},n \models \neg\varphi_1$ & iff &
    $\str{\bword},n \not\models \varphi_1$\\
$\str{\bword},n \models \varphi_1 \since \varphi_2$ & iff &
    for some $j$ with $1 \le j < n$: $\str{\bword},j \models \varphi_2$ and \\
&& for all $k$ with $j < k < n$: $\str{\bword},k \models \varphi_1$\\
$\str{\bword},n \models \varphi_1 \until \varphi_2$ & iff &
    for some $j$ with $n < j \le N$: $\str{\bword},j \models \varphi_2$ and \\
&& for all $k$ with $n < k < j$: $\str{\bword},k \models \varphi_1$
\end{tabular}
\end{center}
We also use the standard abbreviations
\[
    \past\varphi \bnfdef \top \since \varphi \qquad
    \future\varphi \bnfdef \top \until \varphi \qquad
    \yesterday\varphi \bnfdef \bot \since \varphi \qquad
    \history\varphi \bnfdef \varphi \wedge \neg\past\neg\varphi.
\]
An LTL formula $\varphi$ recognizes the language $\Lang(\varphi)$ consisting
of all words $\str{\bword} \in \alphabet^+$ where $\str{\bword}, N \models \varphi$.\footnote{For fragments that only allow $\until$ or $\future$, we use $\str{\bword}, 1 \models \varphi$ instead.}
\begin{example}
    The star-free language $(\sym{a}\sym{b})^+$ can be defined in LTL as
    \begin{equation}\label{eq:ltl-formula}
        \Qprop{\sym{b}} \wedge \history\big( \Qprop{\sym{b}} \to \yesterday \Qprop{\sym{a}} \big) \wedge
        \history\big( (\Qprop{\sym{a}} \wedge \yesterday \top) \to \yesterday \Qprop{\sym{b}} \big).
    \end{equation}
    \zcref{eq:ltl-formula} asserts that the last letter is $\sym{b}$, every $\sym{b}$ is preceded by
    $\sym{a}$, and every $\sym{a}$ that has a predecessor is preceded by $\sym{b}$.
\end{example}

\subsection{Unique-hard attention transformers}\label{subsec:uhat}

\paragraph{Symbol Embedding.}
Let $\alphabet$ be an alphabet.
A \defn{symbol embedding} is a function $\emb \colon \alphabet \to \Q^D$ for some $D > 0$.\footnote{We define transformers over arbitrary rational numbers, as this is the
most general setting in which our upper bounds hold. All results,
however, carry over to fixed-precision arithmetic, i.e., a constant
number of bits per value, independent of input length. The lower bounds
hold under the even stronger restriction to fixed-precision integers.
The precise statement of this carry-over depends on the formalization of fixed-precision arithmetic adopted: standard floating-point representations \citep{goldberg1991what} are not associative, so the value of, for example, a dot product in an attention head can depend on the order in which its summands are evaluated. Our claim should therefore be understood with respect to a fixed, deterministic evaluation order; algebraic identities used in the proofs that rely on associativity (such as the order of summation) need to be re-checked under any other choice.}
A symbol embedding naturally extends to a homomorphism $\alphabet^* \to (\Q^D)^*$,
where $\emb(\sym{a}_1\cdots \sym{a}_N) = \emb(\sym{a}_1), \ldots, \emb(\sym{a}_N)$ for $\sym{a}_1,\dots, \sym{a}_N \in \alphabet$.

\paragraph{Attention layer.}
A unique hard-attention (UHA) layer of width $R > 0$ is
specified by:
\begin{itemize}[itemsep=1pt, topsep=2pt, parsep=0pt, leftmargin=2em]
    \item Three affine transformations: $\matA, \matB \colon \Q^R \to \Q^R$ and
    $\matC \colon (\Q^{R} \times \Q^{R}) \to \Q^S$;
    \item A \defn{mask predicate} $M \colon \N \times \N \to
    \{0,1\}$, defined as one of $M(n,m) \defeq 1$ (no masking),
    $M(n,m) \defeq \indicator{m < n}$ (strict future masking), or $M(n,m) \defeq \indicator{m > n}$
    (strict past masking);
    \item A \defn{tie-breaking function} $\tau$ that selects an element
    of a finite, non-empty subset of $\N$, defined as either $\min$
    (leftmost) or $\max$ (rightmost).
\end{itemize}
Given a sequence of $N$ vectors $\bv_1,\ldots,\bv_N \in \Q^R$ with $N \ge 1$,
the layer operates as follows.
The \defn{score function} is defined as the dot
product
\begin{equation}\label{eq:uha-score}
    \score(\bv_n,\bv_m) \defeq \langle \matA(\bv_n), \matB(\bv_m) \rangle
\end{equation}
for all $n,m \in [N]$. For each position $n \in [N]$, let
\begin{subequations}\label{eq:uha-attended}
\begin{align}
\label{eq:uha-attended-U}
U_n &\defeq \{m \in [N] \mid M(n, m) = 1\} \\
\label{eq:uha-attended-B}
B_n &\defeq \{m \in U_n \mid \forall m' \in U_n \colon \score(\bv_n,\bv_m) \ge \score(\bv_n,\bv_{m'})\}
\end{align}
\end{subequations}
be the set of unmasked positions and the subset of those that maximize the
score, respectively. The \defn{attention vector} at position $n$ is defined as $\veca_n \defeq \bv_{\tau(B_n)}$ if $U_n \ne \emptyset$ and $\veca_n \defeq \mathbf{0}$
otherwise. The layer outputs the sequence $\matC(\bv_1, \veca_1), \dots,
\matC(\bv_N, \veca_N)$.

\paragraph{ReLU layer.}
A \defn{ReLU layer} of width $R > 0$ applies, for a designated coordinate
$r \in [R]$, the ReLU function to the $r^{\text{th}}$ component of each input
vector. 
Formally, define $\rho_r \colon \Q^R \to \Q^R$ by
\begin{equation}\label{eq:relu-layer-vec}
\rho_r(\bv) \defeq (\bv_{1:r-1}, \max(0, v_r), \bv_{r+1:R}).
\end{equation}
Given a sequence of $N$ input vectors $\bv_1,\ldots,\bv_N \in \Q^R$ with $N \ge 1$, the layer outputs the sequence $\rho_r(\bv_1),\ldots,\rho_r(\bv_N)$ obtained by applying $\rho_r$ position-wise.
Equivalently, one could place a
feed-forward network at the end of each encoder layer \citep{Hao22,barcelo2024logical}.

\paragraph{Transformer.}
A \defn{unique hard-attention transformer (UHAT)} is a
length-preserving function $\trans \colon \alphabet^+ \to (\Q^S)^+$
obtained by composing a symbol embedding with a finite sequence of UHA and
ReLU layers of conformable width.
To use a UHAT $\trans \colon \alphabet^+ \to (\Q^S)^+$ as a language recognizer, we equip it with an \defn{acceptance vector} $\bt \in \Q^S$. The language recognized by $\trans$, denoted $\Lang(\trans)$, consists of all words $\str{\bword} \in \alphabet^+$ such that  $\langle \bt, \bv_N \rangle > 0$ with $\trans(\str{\bword}) =\bv_1, \dots, \bv_N \in \Q^S$.\footnote{For fragments that only allow strict past masking, we use $\langle \bt, \bv_1 \rangle$ instead.}

\subsection{Boolean RASP}
As an intermediate step in proving $\EXPSPACE$-hardness for UHATs, we use
Boolean RASP \citep[\mbox{B-RASP};][]{YangCA24}, a programming language
shown to be expressively equivalent to UHATs.
A B-RASP program $\brasp$ is a finite sequence of predicates $P_1, \ldots, P_\Pi \in \{0,1\}^{[N]}$.
The program operates
on an input word $\str{\bword} = \sym{a}_1 \cdots \sym{a}_N \in \alphabet\kplus$.
The first $|\alphabet|$ predicates are defined as follows.
For each $\sym{a} \in \alphabet$, there is a lookup function $\Qprop{\sym{a}}
\in \{0,1\}^{[N]}$ defined by $\Qprop{\sym{a}}(n) = 1$ iff $\sym{a}_n = \sym{a}$.
We label these predicates
$P_1, \ldots, P_{|\alphabet|}$. Each remaining predicate $P_{t+1}$, for $t \ge |\alphabet|$, is built from $P_1, \ldots, P_t$ by one of two operations.
\begin{itemize}[itemsep=1pt, topsep=2pt, parsep=0pt, leftmargin=2em]
\item A \defn{position-wise operation} sets $P_{t+1}(i) \bnfdef R(i)$, where $R(i)$ is
a Boolean combination of $P_1(i), \dots, P_t(i)$.
\item An \defn{attention operation} sets
\begin{equation}
    P_{t+1}(i) \bnfdef\ \minmaxatt_j\, [M(i,j),\, \score(i,j)]\ V(i,j) : D(i)
\end{equation}
where $\minmaxatt\ \in \{\minatt,\maxatt\}$ and we define the following operations
\begin{itemize}[itemsep=1pt, topsep=2pt, parsep=0pt, leftmargin=2em]
    \item $\minatt$ and $\maxatt$ indicate leftmost and
    rightmost tie-breaking, respectively; 
    \item $M(i,j)$ is a mask predicate as in the definition of a
    UHAT; 
    \item $\score(i,j)$ and $V(i,j)$ are Boolean combinations of $P_1(i),
    \ldots, P_t(i)$ and $P_1(j), \ldots, P_t(j)$, called the
    \defn{score predicate} and \defn{value predicate}, respectively;
    \item $D(i)$ is a Boolean combination of $P_1(i), \dots, P_t(i)$.
\end{itemize}
    The semantics of the attention operation are as follows. For each $i \in [N]$, let
\begin{equation}
    o(i) \defeq \begin{cases}
        \min\{j \in [N] \mid M(i,j) = 1 \text{ and } \score(i,j) = 1\},
            & \text{for } \minatt \\
        \max\{j \in [N] \mid M(i,j) = 1 \text{ and } \score(i,j) = 1\},
            & \text{for } \maxatt.
    \end{cases}
\end{equation}
Then $P_{t+1}(i) \defeq V(i, o(i))$ if $o(i)$ exists, and $P_{t+1}(i) \defeq
D(i)$ otherwise.
\end{itemize}
We can view a B-RASP program as a language recognizer by asking whether $P_\Pi(N) = 1$.\footnote{As for UHATs, we use $P_\Pi(1) = 1$ if only strict past masking is allowed.}

\subsection{Recurrent Neural Networks}
As with transformers, we treat recurrent neural networks as language acceptors, following \citet{RNN-hierarchy} and \citet{WGY18,WGY24}. 
We define a \defn{recurrent neural network} (\defn{RNN}) as a quadruple $(\alphabet, \rnnstep, \bhinit, \rnnout)$ where $\alphabet$ is an alphabet, $\rnnstep \colon (\Q^D \times \alphabet) \to \Q^D$ is a transition function, $\bhinit \in \Q^D$ is an initial hidden state, and $\rnnout \colon \Q^D \rightarrow \{\bot, \top\}$ is an acceptance function.
Consider string $\str{\bword} = \sym{a}_1 \cdots \sym{a}_N$.
For $n \geq 1$, we define the $n^{\text{th}}$ hidden state $\bh_n \defeq \rnnstep(\bh_{n-1}, \sym{a}_n)$ inductively.
We say $\str{\bword}$ is accepted iff $\rnnout(\bh_N) = \top$.
As a computational model, it is natural to assume RNNs operate over a fixed precision,
i.e., computation is always performed over rational numbers that can be represented with
a constant $k$ number of bits. 
The details of the actual representation are
not important for our analysis. Therefore, the state space of the above
RNN can be mapped to $D$-vectors over $\{0,1\}^k$ (instead of $\Q$). The
following proposition is now immediate.
\begin{proposition}
    An RNN $(\alphabet, \rnnstep, \bhinit, \rnnout)$ with $\rnnstep \colon (\Q^D \times \alphabet) \to \Q^D$ with fixed precision $k$
    can be represented by a finite automaton with $2^{kD}$ many states.
    \label{prop:rnn-dfa}
\end{proposition}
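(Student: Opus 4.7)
The plan is to exploit the fixed-precision assumption directly: under precision $k$, each coordinate of a state vector in $\R^d$ is drawn from a finite set of at most $2^k$ representable values, so the entire state space of the RNN collapses to a finite set of size at most $(2^k)^d = 2^{kd}$. The target is a deterministic finite automaton simulating the RNN step-for-step.

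First I would explicitly define the DFA $\mathcal{A} = (Q, \Sigma, \delta, q_0, F)$. Set $Q := \{0,1\}^{kd}$, identifying each element with a $d$-vector of $k$-bit rationals; the initial state $q_0$ is the representation of $\bar x_0$; the set $F$ of accepting states is $\{q \in Q \mid f(q) = Acc\}$; and the transition function is defined by $\delta(q,a) := g(q, \emb(a))$, which under fixed precision maps $Q \times \Sigma$ into $Q$ (since $g$'s output is again representable in $k$ bits per coordinate by assumption on the precision of the RNN).

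Next I would verify by a straightforward induction on the length of the input $w = a_1\cdots a_n$ that the DFA's trajectory $q_0,q_1,\ldots,q_n$ with $q_{i+1} = \delta(q_i, a_{i+1})$ coincides with the RNN's trajectory $\bar x_0, \bar x_1, \ldots, \bar x_n$ under the identification of fixed-precision vectors with bit-strings. Acceptance then agrees tautologically: $w \in L(M)$ iff $f(\bar x_n) = Acc$ iff $q_n \in F$ iff $w \in L(\mathcal{A})$. Since $|Q| = 2^{kd}$, the claimed bound follows.

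There is essentially no obstacle here, which is why the authors call it ``immediate.'' The only mild subtlety is being explicit that the closure of the precision-$k$ rationals under $g$ is guaranteed by the definition of fixed-precision computation itself (one is committing to a rounding/truncation convention for arithmetic on $k$-bit numbers); once that is granted, $g$ restricts to a total function on the finite grid $\{0,1\}^{kd}$, and the DFA construction is purely bookkeeping.
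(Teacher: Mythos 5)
Your construction is exactly the one the paper has in mind: the paper offers no explicit proof, calling the proposition ``immediate'' after observing that fixed precision collapses the state space to $d$-vectors over $\{0,1\}^k$, and your DFA with state set $\{0,1\}^{kd}$, transitions given by $g$, and accepting states given by $f$ is the intended bookkeeping. Correct, same approach.
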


\subsection{Size measures and succinctness}\label{subsec:size}
Let $\rep$ be a finite representation of a language, i.e.,
in our case a UHAT, LTL formula, finite automaton, RNN, or B-RASP program.
We define the size of $\rep$, denoted by $|\rep|$, as the length of its minimal binary encoding.
In measuring succinctness of RNN, we put the precision $k$ in unary also as part of the
size measure; since we do not want to compare a transformer that uses a
fixed precision $k$ and allow an RNN that uses a fixed precision $2^k$.
\begin{definition}[$f$-more succinct]\label{def:succinctness}
Let $\classC^{(1)}$ and $\classC^{(2)}$ be classes of finite representations of languages, and let $f \colon \N \to \N$ be a function. 
We say $\classC^{(1)}$ is $f$-\defn{more succinct} than $\classC^{(2)}$ if there is a family of languages $\{\Lang_n\}_{n=1}^\infty$ together with representations $\rep_n^{(1)} \in \classC^{(1)}$ of $\Lang_n$ such that every $\rep_n^{(2)} \in \classC^{(2)}$ representing $\Lang_n$ satisfies $|\rep_n^{(2)}| \ge f(|\rep_n^{(1)}|)$.
\end{definition}
 We say $\classC^{(1)}$ is \emph{exponentially} more succinct than $\classC^{(2)}$ if it is $f$-more succinct for some $f(n) \in \Omega(2^{cn^{d}})$ with $c,d > 0$, and \emph{doubly exponentially} more succinct if for some $f(n) \in \Omega(2^{2^{c n^{d}}})$ with $c,d > 0$.

\begin{definition}[$g$-bounded expansion]\label{def:expansive}
Let $\classC^{(1)}$ and $\classC^{(2)}$ be classes of finite representations of languages, and let $g \colon \N \to \N$ be a function. 
We say $\classC^{(1)}$ has $g$-\defn{bounded expansion} over $\classC^{(2)}$ if for every language $\Lang$ and every choice of representation $\rep{(2)} \in \classC^{(2)}$ of $\Lang$, there is a representation $\rep^{(1)} \in \classC^{(1)}$ of $\Lang$ with $|\rep^{(1)}| \le g(|\rep^{(2)}|)$. 
\end{definition}
We say $\classC^{(1)}$ has \emph{polynomially} bounded expansion over $\classC^{(2)}$ if it has $g$-bounded expansion for some polynomial $g$, and \emph{exponentially} bounded expansion if for some $g(n) \in O(2^{c n^d})$ with $c,d > 0$.

\zcref{def:succinctness} and \zcref{def:expansive} are duals. 
On one hand, \zcref{def:succinctness} is an \emph{existential} lower-bound: it asks for a witness family on which $\classC^{(2)}$ is forced to be at least $f$ times bigger than $\classC^{(1)}$. 
On the other, \zcref{def:expansive} is a \emph{universal} upper-bound: it asks that on every language, every $\classC^{(2)}$ representation has a $\classC^{(1)}$ translation of size at most $g$. 
Neither definition alone is antisymmetric, but together they pin the gap: $\classC^{(1)}$ is $f$-more succinct \emph{and} has $g$-bounded expansion over $\classC^{(2)}$ exactly when the size gap is at least $f$ on a witness family and at most $g$ uniformly. 

\section{The size of smallest witness via non-emptiness problem}
\label{sec:emptiness}
We now consider the problem of checking whether the language
recognized by a UHAT or B-RASP program is non-empty. In particular, the
technique is essentially a simulation of a Turing machine with an
$2^{O(N)}$-sized tape for a given $N$. 
As we will see later,
there are Turing machines such that
the shortest accepted word
by the constructed UHAT
is of length at least $2^{2^{\Omega(N)}}$.

\begin{example}\label{ex:tiling}
We consider an example with $N=4$.
Let $\alphabet = \{\sym{0}, \sym{1}, \sym{\#}, \sym{a},\sym{b},\sym{c}\}$, and let $H \defeq \{(\sym{a},\sym{b}),(\sym{b},\sym{c}),(\sym{b},\sym{a}),(\sym{c},\sym{b})\}$ be a set of constraints specifying which symbols can appear in adjacent positions.
We now describe a B-RASP program that accepts words of the form
\[\sym{0000a_1 \# 0001a_2 \# 0010a_3 \#} \cdots \sym{\# 1111a_{2^4} \#}\]
such that $(\sym{a}_n,\sym{a}_{n+1}) \in H$ for all $1 \le n < 2^4$.
We show how to construct a B-RASP programs that (i) check that the bit counter is incremented, and
(ii) check that the successive symbols are in $H$.\footnote{Filling in the remainder of the B-RASP program to enforce the remaining constraints is straightforward.}
To check (i), we use the following attention operation:
\begin{equation}\label{eq:increment-predic}
\begin{split}
C_{+1}(i) \bnfdef\,\, \maxatt_j[j<i,Q_\#(j)]\ \bigvee_{k=1}^4 \Biggl(&\bigwedge_{r=1}^{k-1} \big(\neg C_r(i) \wedge C_r(j)\big) \wedge C_k(i) \wedge \neg C_k(j) \\
&\wedge \bigwedge_{r=k+1}^{4} \big(C_r(i) \leftrightarrow C_r(j)\big) \Biggr) \colon 1
\end{split}
\end{equation}
Assume $i$ is a $\#$-position.
Attention selects the rightmost $\#$-position $j$ left of position $i$.
Let $b_1^i \cdots b_4^i$ and $b_1^j \cdots b_4^j$ be the bit words directly left of position $i$ and $j$, respectively.
We assume that we already defined $C_k(i) = b_k^i$ and $C_k(j) = b_k^j$ for all $k \in [4]$.
Then, the above value predicate checks that the bit word $b_1^i \cdots b_4^i$ is the bit word $b_1^j \cdots b_4^j$ incremented by 1.
To check (ii), we can use the attention operation
\begin{equation}
M_{\leftarrow}(i) \bnfdef \,\, \maxatt_j[j<i,Q_{\sym{a}}(j) \vee Q_{\sym{b}}(j) \vee Q_{\sym{c}}(j)]\ \bigvee_{(\sym{h},\sym{h}') \in H} Q_{\sym{h}}(j) \wedge Q_{\sym{h}'}(i) : 1.
\end{equation}
If $i$ is a position of a symbol $\sym{a}_i$, attention picks the rightmost position $j$ of a symbol $\sym{a}_j$ to the left of $i$ and
checks with the value predicate that $(\sym{a}_j,\sym{a}_i) \in H$.
Two boundary conditions remain: the input must begin with the counter $\sym{0000}$ and end with the counter $\sym{1111}$.
The first is a position-wise check at the leftmost $\sym{\#}$, requiring $C_1(i) = \cdots = C_4(i) = 0$ at that position; the second is the analogous check at the rightmost $\sym{\#}$, requiring all four bits to be $1$.
We omit the construction of these gadgets here, since they follow the same pattern as $C_{+1}$ and $M_\leftarrow$.
\end{example}

The construction given in \zcref{ex:tiling} allows us to succinctly recognize a language whose shortest word has length exponential
in the number of bits of the binary counter.
In the following, we describe how to extend this idea such that we can reduce an $\EXPSPACE$-complete problem to non-emptiness of a certain B-RASP program.
Intuitively, we place multiple such words as above on top of each other, creating multiple rows and columns (separated by $\sym{\#}$).
Moreover, we introduce vertical constraints, i.e., between rows, in addition to the horizontal constraints $H$.
Using this technique, we will see in \zcref{thm:ltl-succinct} how B-RASP programs
can even succinctly recognize languages whose shortest word has doubly exponential length.

Throughout the rest of this section, we build up to the proof of the following complexity bound.
\begin{restatable}[]{theorem}{thmUHATemptiness}\label{thm:UHAT-emptiness}
The non-emptiness problem for UHATs and B-RASP programs is $\EXPSPACE$-complete.\looseness=-1
\end{restatable}

To prove \zcref{thm:UHAT-emptiness}, we start with the lower bound for B-RASP programs.
\begin{proposition}\label{prop:hardness}
The non-emptiness problem for B-RASP programs is $\EXPSPACE$-hard.
\end{proposition}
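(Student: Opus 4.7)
The plan is to reduce from acceptance by a fixed $\EXPSPACE$-complete Turing machine $M$. On input $x$ of length $n$, $M$ uses at most $N := 2^{p(n)}$ tape cells for some polynomial $p$, and its computation has at most $T := 2^{cN}$ steps for a constant $c$. I build, in polynomial time, a B-RASP program $P_x$ of polynomial size such that $L(P_x)\neq\emptyset$ iff $M$ accepts $x$. A witness $w\in L(P_x)$ encodes a full accepting run of $M$ on $x$ as a sequence of configuration records $\#C_0\#C_1\#\cdots\#C_{T-1}\#$. Each record $C_t$ consists of (i) a time-stamp block of $N$ bits giving $t$ in binary, followed by (ii) a tape-cell block of $N$ annotated cells. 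Within each block, every bit/cell is prefixed by its own $p(n)$-bit position counter, exactly as in \cref{ex:tiling}.

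The program $P_x$ checks: (a) inside every block the $p(n)$-bit position counters run through $0,1,\dots,N-1$, using the increment pattern from \cref{ex:tiling}; (b) the time-stamp of $C_0$ is all zero and its tape encodes $x$ padded with blanks, with the head at position $0$ in $M$'s initial state; (c) the time-stamp of $C_{t+1}$ is that of $C_t$ plus one; (d) each tape cell at position $i$ in $C_{t+1}$ is consistent with the cells at positions $i-1,i,i+1$ in $C_t$ under $M$'s transition relation; (e) some record carries $M$'s accepting state. All cross-record lookups use $\maxatt$ on the strict-past mask, with score predicate asserting that the two positions share the same $p(n)$-bit position counter and belong to the same block type (and, for neighbours, differ by one in that counter, which is directly expressible by the bit pattern from \cref{ex:tiling}). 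Because the rightmost earlier occurrence of ``same position counter, same block type'' lies in the preceding record, this single attention operation reaches exactly into $C_t$ from $C_{t+1}$. Each of (a)--(e) requires $O(p(n))$ position-wise and attention operations, so $P_x$ has polynomial size.

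The main obstacle is (c): incrementing an $N$-bit counter with $N = 2^{p(n)}$, since a single attention operation cannot scan all $N$ lower-order bits at once. I handle this with a carry-absorption trick expressible in B-RASP. For each bit position $j$ in a time-stamp block I precompute an auxiliary Boolean $Z(j)$ that is true iff \emph{some} bit with strictly smaller position counter in the same time-stamp block holds value $0$. This is a single $\maxatt$ with mask $j'<j$, score predicate ``same time-stamp block and bit value $=0$'', and value $1$. The negation $\neg Z(j)$ is exactly the incoming carry at bit $j$, so the per-bit increment rule reduces to a constant-size Boolean combination of $Z(j)$, the current bit, and the matched bit of the previous record's time-stamp. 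Once this carry gadget is in place, the remaining constraints (b), (d), (e) are straightforward local Boolean combinations of B-RASP values obtained from the cross-record attentions. Correctness in both directions is then immediate: $M$ has an accepting computation on $x$ iff some string of the above form exists, iff $L(P_x)\neq\emptyset$.
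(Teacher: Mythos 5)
Your overall strategy is the one the paper itself uses, merely unpacked: the paper reduces from the $2^n$-tiling problem (which is a packaged $\EXPSPACE$ Turing-machine simulation), and the decisive B-RASP idea is identical in both --- index each of the $2^{p(n)}$ cells of a row/configuration by a $p(n)$-bit counter and use $\maxatt$ with strict future masking to reach the most recent earlier position carrying a prescribed counter value, which lies in the previous row/configuration. However, two of your concrete steps fail as written. First, the neighbour lookup: from cell $i$ of $C_{t+1}$ you claim that attending to the rightmost earlier tape position whose counter differs by one from $i$ lands in $C_t$. This is true for counter $i+1$ (cell $i+1$ of $C_{t+1}$ lies to the \emph{right} of cell $i$, so the rightmost earlier occurrence of counter $i+1$ is indeed in $C_t$), but false for counter $i-1$: cell $i-1$ of $C_{t+1}$ itself sits immediately to the left of cell $i$, so $\maxatt$ returns a cell of the \emph{current} record rather than of $C_t$, and condition (d) is not the condition you end up checking. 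You must either route that information through the current record's cell $i-1$ (which has already fetched cell $i-1$ of $C_t$ via a same-counter lookup), or recast the window condition so that only ``same index'' and ``index one larger'' cross-record comparisons occur --- which is precisely what the tiling formulation buys the paper. Second, several of your score predicates assert that positions $i$ and $j$ lie ``in the same block''; this is not a Boolean combination of the vector values at $i$ and $j$ and hence not a legal B-RASP score predicate. It can be simulated (e.g., admit block-start markers among the score-qualifying positions and let the value predicate test what was hit), but as stated your carry gadget $Z(j)$ may attend to a $0$-bit in an earlier record's time-stamp and compute the wrong carry, breaking completeness of the reduction.

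A further observation: the exponential-length time-stamps, and with them the entire carry-absorption gadget, are unnecessary. Because the witness string is finite, the initial-configuration check, the (corrected) local consistency between each record and the most recent preceding one, and the acceptance check already force the records to be the successive configurations of an accepting run. This is exactly why the paper's encoding carries no row numbering at all and only ever increments the $n$-bit column counter, for which a single polynomial-size value predicate as in \cref{ex:tiling} suffices.
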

For the proof, we use the construction sketched in \zcref{ex:tiling} and reduce from the tiling problem.

\begin{problem}\label{prob:tiling}
We now describe the $2^N$-\defn{tiling problem}.
A \defn{tile} is a quadruple $\tile = \langle a, b, c, d \rangle \in \N_0^4$. We write $\tleft(\tile) = a$, $\tup(\tile) = b$, $\tright(\tile) = c$, and $\tdown(\tile) = d$.
\begin{description}
\item[Given:] An instance $\TI = (N, \tileset, \tilefin)$, where $N > 0$ is an integer in unary, $\tileset$ is a finite set of tiles, and $\tilefin \in \tileset$ is a designated final tile.
\item[Question:] Do there exist a natural $M \in \mathbb{N}$ and a function $\tau \colon [2^N] \times [M] \to \tileset$ such that
\begin{enumerate}
\item $\tau(2^N,M) = \tilefin$, \label{itm:t-fin}
\item $\tdown(\tau(i,1)) = \tup(\tau(i,M)) = 0$ for all $1 \le i \le 2^N$, \label{itm:bot-top}
\item $\tleft(\tau(1,j)) = \tright(\tau(2^N,j)) = 0$ for all $1 \le j \le M$, \label{itm:first-last}
\item $\tright(\tau(i,j)) = \tleft(\tau(i+1,j))$ for all $1 \le i < 2^N$ and $1 \le j \le M$, and \label{itm:right-left}
\item $\tup(\tau(i,j)) = \tdown(\tau(i,j+1))$ for all $1 \le i \le 2^N$ and $1 \le j < M$? \label{itm:up-down}
\end{enumerate}
\end{description}
\end{problem}
A configuration of tiles, i.e., a candidate for the function $\tau$, places tiles in $2^N$ columns and an arbitrary number ($M$) of rows.

\begin{proposition}\label{prop:tiling}
The $2^N$-tiling problem is $\EXPSPACE$-complete.
\end{proposition}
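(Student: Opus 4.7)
The plan is to prove the two halves of the $\EXPSPACE$ characterization, following the standard corridor-tiling reduction; since this is the content of \cite{Schwarzentruber19}, I would mainly sketch the ideas rather than write out the tile sets.

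For the upper bound, I would give a nondeterministic exponential-space procedure and appeal to Savitch's theorem. Since $n$ is given in unary, a single row $\tau(\cdot, j)$ of the tiling has bitsize polynomial in $2^n$, i.e., exponential in the input. The algorithm guesses the tiling row by row, storing at any time only the most recently guessed row. Horizontal compatibility $\tright(\tau(i,j)) = \tleft(\tau(i+1,j))$ and the boundary conditions $\tleft(\tau(1,j)) = \tright(\tau(2^n,j)) = 0$ are verified within a row as it is generated; the vertical compatibility $\tup(\tau(i,j)) = \tdown(\tau(i,j+1))$ and the initial boundary $\tdown(\tau(i,1)) = 0$ are checked when a new row is guessed on top of the previous one. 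The algorithm nondeterministically decides to halt after some number of rows, at which point it verifies $\tup(\tau(i,m)) = 0$ everywhere in the final row and $\tau(2^n, m) = t_{\mathit{fin}}$. A counter for the row index needs at most polynomially many bits beyond the row itself, since any solution can be shortened to at most $|T|^{2^n}$ rows by pigeonhole on repeated rows. This gives nondeterministic exponential space, which collapses to $\EXPSPACE$ by Savitch.

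For the lower bound, I would reduce from acceptance of a Turing machine $M$ on input $w$ whose tape is bounded by $2^{p(|w|)}$ for some fixed polynomial $p$; set $n := p(|w|)$ and emit $n$ in unary. The reduction is the standard Wang-tile simulation of $M$: the tile at row $j$, column $i$ encodes the content of cell $i$ in the $j$-th configuration of $M$, together with head and state information if the head currently sits at cell $i$. Left/right colors propagate the ``no head here'' bit and the local tape symbols horizontally; up/down colors encode the same cell in two consecutive configurations, so that each transition of $M$ can be enforced by a constant-sized local window. The boundary conditions $\tleft = \tright = 0$ on the outer columns serve as tape endmarkers; $\tdown(\tau(i,1)) = 0$ pins down the initial configuration encoding $w$ (head over cell $1$ in state $q_0$, input followed by blanks); and requiring $\tau(2^n, m) = t_{\mathit{fin}}$ together with $\tup(\tau(i,m)) = 0$ across the top row forces $M$ to reach a (w.l.o.g.\ canonical) accepting configuration. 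Since the simulated tape has length $2^n$ and $n$ is given in unary, the emitted tile set has polynomial size.

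The main obstacle is the tile-level bookkeeping: implementing each transition of $M$ as a four-coloring constraint is routine but tedious, particularly for head movements across cell boundaries and for seeding the first row with the input word padded by blanks (which typically requires a separate family of ``initialisation'' tiles that are forced by the $\tdown = 0$ boundary). Since \cite{Schwarzentruber19} carries this construction out in detail, I would defer to that source for the explicit tile sets rather than reproduce them here.
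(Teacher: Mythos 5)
The paper offers no proof of this proposition at all---it is stated as a known result and attributed to \citep{Schwarzentruber19}. Your sketch is the standard and correct argument (nondeterministic row-by-row guessing in exponential space plus Savitch for membership, and a Wang-tile simulation of an exponential-space-bounded Turing machine for hardness), and since you likewise defer the explicit tile sets to the same reference, your treatment is essentially in line with the paper's.
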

\begin{proof}
The result follows from Theorem 5 in \citet{Schwarzentruber19} by choosing $k=1$.
\end{proof}
To prove \zcref{prop:hardness}, we construct a B-RASP program of size polynomial in $N$ that
accepts an encoding of a configuration of tiles as a sequence of words, similar to those displayed in \zcref{ex:tiling},
if and only if the configuration is a solution of the given $2^N$-tiling problem instance.
The key observation is that strict future masking with rightmost tie-breaking
enables us to check conditions between successive tiles in a row (\zcref{itm:right-left})
but also between the current tile and the tile at the most recent past occurrence of the same counter value, i.e., in the same column of the previous row (\zcref{itm:up-down}).
The proof of the next lemma can be found in \zcref{app:tiling-brasp}.

\begin{restatable}[]{lemma}{lemTilingBrasp}\label{lem:tiling-brasp}
\revisedA{Given a $2^N$-tiling problem instance, one can construct in time polynomial in $N$ a B-RASP program,
whose language is non-empty iff the $2^N$-tiling problem instance has a solution.}
\end{restatable}

\zcref{lem:tiling-brasp} reduces the $2^N$-tiling problem to the non-emptiness problem for B-RASP programs.
Thus, together with \zcref{prop:tiling}, it implies \zcref{prop:hardness}.

We observe that the B-RASP program constructed in \zcref{lem:tiling-brasp} is of a special form,
which allows for a polynomial-time translation to UHAT.

\begin{restatable}[]{lemma}{lemBraspUHAT}\label{lem:brasp-uhat}
Given a B-RASP program $P_1,\ldots,P_\Pi$ where every attention operation is of the form
\begin{equation}
P_{t+1}(i) \bnfdef\ \minmaxatt_j [M(i,j),  \score(j) \wedge \bigwedge_{k \in K} P_k(i) \leftrightarrow P_k(j)]\ V(i,j) : D(i),
\end{equation}
where
$|\alphabet| \le t < \Pi$, $\minmaxatt\ \in \{\minatt,\maxatt\}$, $\score(j)$ is a Boolean combinations of $P_1(j),\ldots,P_t(j)$, and $K \subseteq \{1,\dots,t\}$,
one can construct in polynomial time a UHAT that recognizes the same language.
\end{restatable}
\begin{proof}[Proof sketch]
Any Boolean combination of position-wise predicates can be simulated by a sequence of attention layers (each layer simply applies its affine map $\matC$ to the current vector and disregards the attention vector) and ReLU layers.
For each B-RASP attention operation, we use a single UHA layer (\zcref{subsec:uhat}) whose mask predicate is $M$, whose tie-breaker matches $\minmaxatt$, whose affine maps $\matA, \matB$ compute the relevant components of the score, and whose affine map $\matC$ combines the position's input $\bv_n$ with the attention-selected $\veca_n$.
The value predicate $V(i,j)$ is simulated as follows: the layer's affine map $\matC$ copies the relevant components of $\veca_i$---the layer-$\ell$ vector that attention selected from position $o(i)$ (the unique position whose layer-$\ell$ vector wins the argmax in the UHA layer; equivalently, $o(i) = \tau(B_i)$ in the UHAT notation of \zcref{subsec:uhat}, with $B_i$ the argmax set among the unmasked positions and $\tau$ the tie-breaker)---into the layer's output at position $i$, and a small ReLU sub-network applies the Boolean combination $V$ to those copied components.
The part $\score(j)$ of the score predicate that only depends on $j$
can be simulated using an additional preliminary layer that already computes the result of $\score(j)$ at every position $j$.
For the part $\bigwedge_{k \in K} P_k(i) \leftrightarrow P_k(j)$ that checks equality of two binary numbers,
we provide a score function that maximizes the attention score if the two binary numbers are equal.
The full proof can be found in \zcref{app:brasp-uhat}.
\end{proof}

\begin{proposition}\label{prop:UHAT-hardness}
The non-emptiness problem for UHAT is $\EXPSPACE$-hard.
\end{proposition}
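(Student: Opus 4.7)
The plan is to derive the $\EXPSPACE$ lower bound for UHAT from the one already established for B-RASP in \cref{prop:hardness}, by exhibiting a polynomial-size translation from B-RASP programs to equivalent UHATs and applying it to the program produced by that reduction.

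First I would invoke the translation of \cite{YangCA24} from B-RASP to UHAT, making explicit that it incurs only polynomial blow-up. Concretely: every Boolean vector $P_k$ of the B-RASP program is allocated a dedicated coordinate of the UHAT's state $\bm v_i$; every position-wise Boolean operation is implemented by a constant-depth stack of affine-plus-ReLU layers, which suffices because conjunction, disjunction and negation on $\{0,1\}$-valued coordinates are all expressible in constant depth and size linear in the Boolean formula; and every B-RASP attention operation is implemented by a single UHA layer together with a follow-up ReLU layer evaluating the value predicate $V$ and default predicate $D$ on the pair $(\bm v_i, \bm a_i)$.

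The most delicate step is simulating the B-RASP score predicate $S(i,j)$ by the bilinear UHAT score function $\langle A(\bm v_i), B(\bm v_j)\rangle$. The standard device is to put $S(i,j)$ in DNF and precompute, at each position, a Boolean coordinate per clause, so that $S(i,j)$ becomes a bilinear form in the precomputed features; an additional normalisation step ensures that all positions $j$ with $S(i,j)=1$ tie for the UHAT maximum, so that leftmost/rightmost UHAT tie-breaking matches $\minatt$/$\maxatt$ exactly. The mask predicate $M(i,j)$ coincides with the UHAT mask predicate, so no extra work is needed there. I expect the main obstacle to be a clean bookkeeping argument that this encoding keeps the number of coordinates and layers polynomial in the B-RASP program size; fortunately, the programs produced in the proof of \cref{prop:hardness} have score predicates with very small DNFs (often depending on $j$ alone, as in the operations of \cref{ex:tiling}), so the encoding collapses to assigning a single precomputed coordinate as the UHAT score.

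Applying the resulting polynomial translation to the B-RASP program of \cref{prop:hardness} will yield a UHAT $\mathcal{T}$ of size polynomial in $n$ whose language is non-empty iff the $2^n$-tiling instance is solvable. Combined with \cref{prop:tiling}, this completes the $\EXPSPACE$ lower bound for UHAT non-emptiness.
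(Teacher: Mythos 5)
Your overall strategy is the same as the paper's: do not translate B-RASP to UHAT in general, but convert the specific program from \cref{prop:hardness} to a UHAT in polynomial time and inherit the lower bound via \cref{prop:tiling}. The treatment of position-wise operations, value predicates, masking, and tie-breaking is fine. The gap is in your handling of score predicates that genuinely depend on both $i$ and $j$. Your device --- put $S(i,j)$ in (disjoint) DNF and allocate one precomputed coordinate per clause so that the score becomes $\sum_c \alpha_c(i)\beta_c(j)$ --- is exponential precisely for the predicates the reduction cannot do without. The program of \cref{prop:hardness} must, at each $\#$-position, attend to the most recent earlier $\#$-position carrying the \emph{same} $n$-bit counter value (this is how the vertical up--down tiling constraint and the bottom-row condition are checked), so its score predicate is $Q_\#(j) \wedge \bigwedge_{k=1}^n \bigl(C_k(i) \leftrightarrow C_k(j)\bigr)$. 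The equality function on two $n$-bit strings has no DNF with fewer than $2^n$ terms, so ``one coordinate per clause'' yields an exponential-size UHAT and the reduction is no longer polynomial; your remark that the score predicates ``have very small DNFs'' is true for \cref{ex:tiling} but false for the full construction. The paper's fix (\cref{lem:equality}) is a bilinear form that is \emph{not} a DNF bilinearization: encode each bit $b$ as the pair $(b,1-b)$ and take the plain dot product, which counts agreeing bit positions and attains its maximum $n$ exactly at equality --- a rank-$2n$ score that composes with the $j$-only part $Q_\#(j)$ by adding a dominating term.

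A second, smaller point you should make explicit: when no unmasked $j$ satisfies $S(i,j)$, B-RASP falls back to the default $D(i)$, whereas a UHA layer still returns the argmax of the (now meaningless) score. The simulation therefore has to re-evaluate $S$ on the retrieved pair $(\bm v_i,\bm a_i)$ position-wise after attention and switch to $D(i)$ if it fails; with the equality gadget this amounts to testing whether the dot product reached its maximal value. This is routine but is part of why the value/default computation must see the copied coordinates of the attended position.
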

\begin{proof}
Together, \zcref{prop:tiling}, \zcref{lem:tiling-brasp} and \zcref{lem:brasp-uhat} imply the $\EXPSPACE$ lower bound for UHAT.
\end{proof}

\begin{corollary}
The non-emptiness problem for UHATs in which every layer uses strict future masking and rightmost tie-breaking (or, dually, strict past masking and leftmost tie-breaking) is $\EXPSPACE$-hard.
\end{corollary}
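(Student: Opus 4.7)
The plan is to derive the corollary from two observations — one direct, one by symmetry.

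\textbf{First case (strict future masking, rightmost tie-breaking).} First, I would revisit the B-RASP program constructed in \cref{prop:hardness} and verify its stated claim that every attention operation uses the mask $M(i,j) := (j<i)$ together with $\maxatt$. The two sample operations in \cref{ex:tiling} (the counter-increment check $C_{+1}$ and the horizontal compatibility check $M_\leftarrow$) already exhibit this shape, and an inspection of the full reduction confirms the same pattern holds uniformly. I would then trace the conversion to a UHAT sketched in \cref{prop:UHAT-hardness} and check that each attention operation is simulated by an attention layer with the \emph{same} mask and tie-breaking, while position-wise Boolean operations are implemented by affine and ReLU layers (which carry no mask at all). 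The only nontrivial step is the preliminary layer that precomputes $S(j)$ when the score predicate depends only on $j$; since this is a purely position-wise computation, it can be folded into ReLU layers rather than realized by an additional attention layer, so no new mask type is introduced. Hence the resulting UHAT lies in the restricted class, giving $\EXPSPACE$-hardness directly.

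\textbf{Second case (strict past masking, leftmost tie-breaking).} For the symmetric case I would use a reversal argument. Given any UHAT $\mathcal{T}$ over $\Sigma$ using only strict future masking and rightmost tie-breaking, I would construct a UHAT $\mathcal{T}^R$ by replacing every mask $M(i,j) := (j<i)$ with $M'(i,j) := (j>i)$ and every $\max$ tie-break with $\min$, leaving all affine transformations and ReLU layers intact and flipping the output position (last becomes first, or vice versa). Under the position bijection $i \mapsto n-i+1$ induced by reversing the input, the two changes cancel exactly, so $L(\mathcal{T}^R) = \{w^R \mid w \in L(\mathcal{T})\}$ and in particular $L(\mathcal{T}) \ne \emptyset$ iff $L(\mathcal{T}^R) \ne \emptyset$. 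Composing this polynomial-time transformation with the hardness established in the first case yields the claimed $\EXPSPACE$-hardness for UHATs restricted to strict past masking and leftmost tie-breaking.

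\textbf{Main obstacle.} The only delicate point I anticipate is confirming that the conversion in \cref{prop:UHAT-hardness} really can be carried out without silently invoking an attention layer with no masking (as might be suggested by the ``preliminary layer'' trick for $j$-only score predicates) or some other forbidden mask/tie-breaking combination. This is bookkeeping rather than conceptual: every attention operation in the target B-RASP program already uses the required combination, and every remaining ingredient of the conversion is either affine, ReLU, or an attention operation of the same kind; once this is checked, both halves of the corollary are immediate.
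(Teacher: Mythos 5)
Your proposal is correct and matches the paper's argument: the paper likewise observes that every attention operation in the B-RASP program from \cref{prop:hardness} has the form $\maxatt_j[j<i,\cdot]$ and that the conversion in \cref{prop:UHAT-hardness} preserves this mask/tie-breaking combination (the $j$-only score predicates and \cref{lem:equality} introduce nothing else). For the symmetric case the paper merely says ``similar''; your explicit reversal reduction $\mathcal{T} \mapsto \mathcal{T}^R$ is a clean and valid way to make that symmetry precise.
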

\begin{proof}
The B-RASP program constructed in \zcref{lem:tiling-brasp} uses only strict future masking and rightmost tie-breaking, and it can be adapted to use only strict past masking and leftmost tie-breaking.\footnote{In case of strict past masking, we use the first coordinate in the acceptance condition.} The UHAT translation in \zcref{lem:brasp-uhat} preserves the mask predicate and tie-breaking. Therefore the $\EXPSPACE$ lower bound established by \zcref{prop:tiling}, \zcref{lem:tiling-brasp}, and \zcref{lem:brasp-uhat} transfers to UHATs in either of the two restricted classes.
\end{proof}

We now prove the upper bounds in \zcref{thm:UHAT-emptiness}.
To this end, we first note that any B-RASP program can be converted in exponential time into an LTL formula
using the construction given by \citet{YangCA24}.
In \zcref{prop:UHAT-LTL} we prove that the same holds true for UHATs,
which improves the doubly exponential construction given by \citet{YangCA24} that translates UHATs into B-RASP programs first.
These constructions suffice for the exponential-space upper bounds in \zcref{thm:UHAT-emptiness} since
non-emptiness of languages given by LTL formulas is in polynomial space \citep{SistlaC85}.

To perform the translation from UHAT to LTL, we first have to make the crucial observation that the values occurring during the computation of a UHAT are not too large.
The proof of the following proposition can be found in \zcref{app:UHAT-values}.
\begin{restatable}[]{proposition}{propUHATvalues}\label{lem:UHAT-values}
For every UHAT $\trans$, the precision required to evaluate $\trans$ on any input is polynomial in $|\trans|$, i.e., every rational value arising in the computation of $\trans$ can be represented with at most $\poly(|\trans|)$ bits.
\end{restatable}

By \zcref{lem:UHAT-values}, the set of rationals that can arise in the computation of $\trans$ is finite, and each member has bit-length polynomial in $|\trans|$.
The set therefore has cardinality at most $2^{\poly(|\trans|)}$, i.e., exponential in $|\trans|$, and can be enumerated in exponential time. 
This is precisely what makes the layer-by-layer LTL construction in the proof of \zcref{prop:UHAT-LTL} feasible: at each layer we have a finite, enumerable, polynomial-bit-length set of vectors to range over, which implies that the LTL formula only has to simulate the position-wise behavior of attention layers, i.e.,
masking and selecting the position of the attention vector,
but not the actual computation of values.

The proof of the following proposition can be found in \zcref{app:UHAT-LTL}.
\begin{restatable}[]{proposition}{propUHATLTL}\label{prop:UHAT-LTL}
Given a UHAT $\trans$ recognizing a language $\Lang \subseteq \alphabet\kplus$, one can construct in exponential time an LTL formula $\varphi$ that recognizes $\Lang$.
\end{restatable}

Note, if we start with a UHAT, where every attention layer uses strict future masking and leftmost tie-breaking (resp. strict past masking and rightmost tie-breaking),
then the LTL formula constructed in the proof of \zcref{prop:UHAT-LTL} only uses the $\past$ (resp. $\future$) operator.
It was shown by \citet{SistlaC85} that the non-emptiness problem for the fragments of LTL
that only allow $\past$ or $\future$ is $\NP$-complete.
Thus, we obtain an improved complexity upper bound for such restricted UHATs.
\begin{corollary}
The non-emptiness problem for UHATs, where each attention layer uses strict future masking/leftmost tie-breaking
(resp. strict past masking/rightmost tie-breaking), is in $\NEXP$.
\end{corollary}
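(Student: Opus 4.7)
The plan is to combine the improved UHAT-to-LTL translation of \cref{prop:UHAT-LTL} with the preceding remark on the purely past (resp.\ future) character of the resulting formula, and then invoke the Sistla--Clarke complexity bound for the restricted LTL fragments.

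First, I would apply \cref{prop:UHAT-LTL} to the given restricted UHAT $\mathcal T$ to obtain an LTL formula $\varphi$ of size at most exponential in $|\mathcal T|$ with $L(\varphi) = L(\mathcal T)$. Second, I would verify the remark that when every attention layer uses strict future masking and leftmost tie-breaking, the construction in the proof of \cref{prop:UHAT-LTL} produces a $\varphi$ that only uses the past operator $\since$ (and its derivatives such as $\past$), since attention at position $i$ only inspects positions $j < i$, and leftmost tie-breaking corresponds to picking the earliest such past witness, which is naturally expressible using only past-time modalities. The symmetric case (strict past masking and rightmost tie-breaking) analogously yields a $\until$-only formula.

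Next, I would invoke the Sistla--Clarke result already cited just before the statement: non-emptiness (equivalently, satisfiability) for LTL formulas that use only $\past$, or only $\future$, is $\NP$-complete. Hence, given $\varphi$ of size $2^{O(|\mathcal T|)}$, we can guess in nondeterministic time polynomial in $|\varphi|$ a witness for non-emptiness, which is nondeterministic exponential time in $|\mathcal T|$. This yields the desired $\NEXP$ upper bound.

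The main obstacle is the second step: faithfully substantiating the claim that the restricted UHAT really does translate into a $\past$-only (resp.\ $\future$-only) LTL formula. This requires inspecting the construction behind \cref{prop:UHAT-LTL} and checking that (i) every subformula encoding the masking predicate is strictly past-directed under strict future masking, and (ii) leftmost tie-breaking over past positions can be expressed without reintroducing $\future$-operators, for instance by encoding ``the earliest past position with property $P$'' as ``there is a past $P$-position with no strictly earlier $P$-position'', which stays inside the $\past$-fragment. Once this closure under the restriction is verified, the complexity bound follows immediately from the above size and complexity estimates.
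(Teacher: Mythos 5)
Your proposal is correct and follows essentially the same route as the paper: translate the restricted UHAT into an exponential-size LTL formula via \cref{prop:UHAT-LTL}, observe that strict future masking with leftmost tie-breaking yields a $\past$-only formula (the ``earliest witness with no strictly earlier witness'' pattern is exactly the $\past(\varphi_{\bm a}\wedge\neg\past\,\cdots)$ shape in the paper's construction), and then apply the Sistla--Clarke $\NP$-completeness result for the unary fragments, giving $\NEXP$ overall.
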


Note that it has been shown by \citet{tale} that such restricted UHATs
are equally expressive as the LTL fragment with only $\past$ (respectively, $\future$).\footnote{Note that LTL with a subset of the operators is often defined over $\textsc{eos}$-padded strings; this choice affects its expressive capacity when using LTL with a subset of the operators. 
For instance, \citeposs{LC26} demonstration that LTL[$\past$] can \emph{not} accept $\{a,b\}^*a$ with $\textsc{eos}$-padding, but can without the padding.}
However, the construction by 
\citet{tale} from UHAT to the LTL fragments incurs a doubly exponential blow-up,
as opposed to our singly exponential translation. The full proof of \zcref{thm:UHAT-emptiness}, combining the preceding lemmas, propositions, and corollaries into the two directions of $\EXPSPACE$-completeness, can be found in \zcref{app:UHAT-emptiness}.

\section{Succinctness across representations}
\label{sec:succinctness}
We now study how succinctly transformers can represent languages compared to standard models from formal language theory.
We first compare transformers to LTL.
One suggestion that transformers may be more succinct than LTL comes from \zcref{thm:UHAT-emptiness},
which shows that the non-emptiness problem for UHATs is $\EXPSPACE$-complete,
whereas for LTL the corresponding problem is known to be $\PSPACE$-complete.
The following result shows that this exponential gap is also manifested in terms of the formalisms' respective succinctness.

\begin{restatable}[]{theorem}{thmLTLsuccinct}\label{thm:ltl-succinct}
UHATs are exponentially more succinct than LTL.
\end{restatable}
\begin{proof}
It suffices to exhibit a witness family $\{\Lang_n\}_{n=1}^\infty$ together with UHATs $\trans_n$ of size $\poly(n)$ recognizing $\Lang_n$ such that every LTL formula recognizing $\Lang_n$ has size at least $c_1 2^{c_2 n}$ for constants $c_1, c_2>0$. Such a family is a witness in the sense of \zcref{def:succinctness}: from $|\trans_n| = \poly(n)$, say $|\trans_n| \le c_0 n^d$, we get $n \ge (|\trans_n|/c_0)^{1/d}$, and substituting into $|\varphi_n| \ge c_1 2^{c_2n}$ gives $|\varphi_n| \ge c_12^{c'\,|\trans_n|^{1/d}}$ for $c'_2 = c_2/c_0^{1/d}$. This witnesses $f$-more succinctness for $f(m) = c_1 2^{c'_2\,m^{1/d}}$ as required.

\paragraph{Polynomial-size UHAT.}
This direction proceeds in 3 steps.
\begin{enumerate}[itemsep=2pt, topsep=2pt, parsep=0pt, leftmargin=1.5em]
    \item Let $\TM_n$ be a (deterministic) Turing machine that implements a binary counter with $2^n$ bits,
    i.e., it writes $0^{2^n}$ on its tape and increments the binary number until it has written $1^{2^n}$ on its tape and accepts.
    In particular, $\TM_n$ first checks that it was started with the empty tape and then writes $2^n$ many $0$'s on its tape
    using an additional $n$-bit counter.
    To increment the $2^n$-bit counter, $\TM_n$ traverses the counter from left to right while
    flipping every $1$ to $0$ until it encounters the first $0$, which is then flipped to $1$.
    To initialize the counter with $0$'s, $\TM_n$ uses a linear number of states in $n$.
    Incrementing can be done with a constant-sized Turing machine.
    Moreover, $\TM_n$ uses an exponential number of tape cells in $n$ and
    the unique accepting run has length at least $2^{2^n}$.
    \item \Citet{Boas97} gives a reduction from Turing machines to tiling problem instances
    that encodes configurations of Turing machines in its rows and
    a correct tiling corresponds to a valid execution of the Turing machine.
    We observe that the $2^{p(n)}$-tiling problem instance $\TI_n$,
    for some polynomial $p$,
    constructed from $\TM_n$ has size polynomial in $n$
    and it has the property that the smallest correct tiling has at least $2^{2^n}$ many rows.
    \item \zcref{lem:tiling-brasp} and \zcref{lem:brasp-uhat} show
    that there is a UHAT $\trans_n$ of size polynomial in the size of $\TI_n$
    that recognizes encodings of correct tilings of $\TI_n$.
    Thus, $\trans_n$ is of size polynomial in $n$ and the smallest accepted word has length at least $2^{2^n}$.
    We let $L_n$ be the language recognized by $\trans_n$.
\end{enumerate}

\paragraph{Exponential LTL lower bound.}
Let $\varphi_n$ be an LTL formula that recognizes $L_n$.
Because the smallest accepted word by any LTL formula has length at most exponential in the formula size, using an exponential conversion from LTL to finite automata similar to \citet{VardiW94},
it follows that the size of $\varphi_n$ is at least exponential in $n$.
\end{proof}

Conversely, the next result shows that UHATs have polynomially bounded expansion over LTL: every LTL formula has an at most polynomially larger UHAT for the same language. Combined with \zcref{thm:ltl-succinct}, this pins the gap between UHATs and LTL: at most polynomial universally, and at least exponential on a witness family.
The proof can be found in \zcref{app:LTL-UHAT}.

\begin{restatable}[]{proposition}{propLTLUHAT}\label{lem:ltl-uhat}
UHATs have polynomially bounded expansion over LTL.
In particular, given an LTL formula $\varphi$, one can construct in polynomial time a UHAT $\trans$ such that $\Lang(\trans) = \Lang(\varphi)$.
\end{restatable}

We show next that UHATs are doubly exponentially more succinct than finite automata.

\begin{theorem}\label{thm:automata-succinct}
UHATs are doubly exponentially more succinct than finite automata.
\end{theorem}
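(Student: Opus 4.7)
The plan is to recycle the UHAT family $\{\mathcal{T}_n\}_{n\ge 1}$ constructed in the proof of \cref{thm:ltl-succinct} and exploit the fact that any finite automaton recognizing a non-empty language must accept some short word. Concretely, I would first recall that $\mathcal{T}_n$ has size polynomial in $n$ and that the shortest word in $L_n := L(\mathcal{T}_n)$ has length at least $2^{2^n}$, since $L_n$ encodes correct tilings of the instance $\mathcal{I}_n$ obtained from the binary counter Turing machine $\mathcal{M}_n$, whose unique accepting run traverses $2^{2^n}$ configurations.

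Next, I would invoke the standard reachability argument for finite automata: if an NFA $\mathcal{A}$ with $k$ states recognizes a non-empty language, then it accepts some word of length at most $k-1$, obtained by taking a simple path from the initial to an accepting state in the transition graph. Applying the contrapositive to any automaton $\mathcal{A}_n$ recognizing $L_n$ and using the shortest-word lower bound of $2^{2^n}$, I conclude that $\mathcal{A}_n$ must have at least $2^{2^n}$ states. Since the binary encoding of a finite automaton has size at least linear in its number of states, $|\mathcal{A}_n| \ge 2^{2^n}$, which is doubly exponential in $|\mathcal{T}_n| = \mathrm{poly}(n)$.

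Finally, I would match this to the formal definition of ``doubly exponentially more succinct'' from \cref{sec:prelim}: for any function $f \in 2^{2^{o(n)}}$, there exists some $n$ such that $2^{2^n} > f(\mathrm{poly}(n))$, so $|\mathcal{A}_n| > f(|\mathcal{T}_n|)$, which is exactly the required separation. The only substantive step is the shortest-accepted-word bound for $\mathcal{T}_n$, but this is already established in the proof of \cref{thm:ltl-succinct} via the tiling reduction; the automaton-side argument is elementary. I do not expect any real obstacle here — the theorem is essentially a corollary of \cref{thm:ltl-succinct} combined with the pumping-style observation about finite automata.
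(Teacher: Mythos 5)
Your proposal is correct and follows exactly the paper's argument: reuse the family $\mathcal{T}_n$ from the proof of \cref{thm:ltl-succinct} with shortest accepted word of length at least $2^{2^n}$, and combine this with the observation that a $k$-state automaton recognizing a non-empty language accepts a word of length at most linear in $k$. No gaps; this is the same proof.
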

\begin{proof}
We reuse the witness family from \zcref{thm:ltl-succinct}: $\Lang_n$ is the language recognized by the UHAT $\trans_n$ constructed there.
$\trans_n$ is of size polynomial in $n$ and the smallest accepted word has length at least $2^{2^n}$.
Because any automaton recognizing a non-empty language accepts a word of length at most linear in the automaton size,
the smallest automaton that recognizes $\Lang_n$ has size at least doubly exponential in $n$.
Combined with $|\trans_n| = \poly(n)$, this exhibits the witness family required by \zcref{def:succinctness} with $f$ doubly exponential.
\end{proof}

Conversely, the best known translation from counter-free automata---the class of finite automata equivalent to LTL---to LTL incurs an exponential blow-up \citep{MalerP90}. Composing this with \zcref{lem:ltl-uhat} yields an at-most exponential-time translation from counter-free automata to UHATs, which shows that UHATs have exponentially bounded expansion over finite automata when restricted to the star-free languages. 
The translation in \citet{YangCA24} also incurs an exponential blow-up via \citet{MalerP90}.

Combining \zcref{thm:automata-succinct} with \zcref{prop:rnn-dfa} yields the following succinctness gap between UHATs and RNNs.
\begin{corollary}\label{cor:rnn-succinct}
UHATs are exponentially more succinct than RNNs.
\end{corollary}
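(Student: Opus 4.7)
The plan is to pipe the doubly exponential UHAT-vs-DFA gap from \cref{thm:automata-succinct} through \cref{prop:rnn-dfa}, which converts any fixed-precision RNN into an equivalent DFA at the cost of only a single exponential blow-up in the RNN's description. Since a single exponential applied to a doubly exponential lower bound still leaves a singly exponential lower bound, an exponential gap between UHATs and RNNs survives.

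Concretely, I would reuse the family $\{L_n\}_{n \ge 1}$ already constructed in the proof of \cref{thm:automata-succinct}: each $L_n$ is recognized by a UHAT $\mathcal{T}_n$ of size polynomial in $n$, yet its shortest accepted word has length at least $2^{2^n}$, so any DFA for $L_n$ needs at least $2^{2^{\Omega(n)}}$ states. Now let $M$ be any RNN of dimension $d$ and precision $k$ recognizing $L_n$. By \cref{prop:rnn-dfa}, $M$ is equivalent to a DFA with $2^{kd}$ states, and the size convention of \cref{sec:prelim} charges $k$ in unary (while $d$ is already reflected in the encoding of the transition function $g$), so $k,d \le |M|$ and hence $kd \le |M|^2$. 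Combining this with the DFA lower bound gives $2^{O(|M|^2)} \ge 2^{kd} \ge 2^{2^{\Omega(n)}}$, which rearranges to $|M| \ge 2^{\Omega(n)}$. Since $|\mathcal{T}_n|$ is polynomial in $n$, this is exponential in $|\mathcal{T}_n|$, yielding the exponential succinctness gap in exactly the same sense used in \cref{thm:ltl-succinct,thm:automata-succinct}.

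The main obstacle, and the only place where care is needed, is the bookkeeping of the RNN size measure. If $k$ were charged in binary rather than unary, a size-$|M|$ RNN could already simulate a DFA with $2^{2^{|M|}}$ states, instantly absorbing the doubly exponential DFA lower bound and collapsing the gap. Once the unary convention for $k$ is in place — as the authors deliberately fix in \cref{sec:prelim} precisely for this kind of comparison — the remainder of the argument is purely arithmetic on exponents, with no additional construction needed beyond invoking \cref{thm:automata-succinct} and \cref{prop:rnn-dfa}.
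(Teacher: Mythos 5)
Your proposal is correct and is exactly the paper's argument: the paper proves this corollary in one sentence by combining \cref{thm:automata-succinct} with \cref{prop:rnn-dfa}, and your write-up simply fills in the exponent arithmetic (a single-exponential DFA simulation of a fixed-precision RNN cannot absorb a doubly exponential DFA lower bound), including the same observation about charging the precision $k$ in unary that the paper makes in \cref{sec:prelim}.
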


\section{Applications}
\label{sec:apps}
%

As a consequence of our results, we can show that reasoning about the language accepted by a UHAT, e.g., checking
equivalence and emptiness, is intractable.
Contrast this fact with deterministic finite automata, where these problems can be done in
polynomial time \citep{Kozen}. 
As an example, we give a precise statement about the complexity of \emph{equivalence problem},
i.e., the problem of checking whether two UHATs recognize the same
language.
The proof can be found in \zcref{app:UHAT-equiv}.
\begin{restatable}[]{theorem}{thmUHATequiv}\label{thm:UHAT-equiv}
Deciding the equivalence between two UHATs is $\EXPSPACE$-complete.
\end{restatable}

\section{Concluding remarks}
\label{sec:related}
\paragraph{Related work.} 
Our work directly draws upon a number of recent results
\citep{YangCA24,barcelo2024logical,tale,LC25}, which demonstrate the close connection
between unique-hard attention transformers and LTL and, thus, the star-free regular languages.
However, none of these results concerns succinctness and computational complexity of verification.
Closer to our complexity-theoretic angle, \citet{SAL25} studied the verification problem for transformers of various precisions and showed that fixed-precision transformers are at least $\NEXP$-hard (i.e., hard for the class of problems solvable by nondeterministic algorithms that run in exponential time). Their technique implies that transformers can be (singly) exponentially more succinct than finite automata, but yields no conclusion about their succinctness relative to representations like LTL or RNNs.
Our results substantially improve on this by showing that transformers can be doubly exponentially more succinct than automata, and exponentially more succinct than LTL and RNNs. Our model is also simpler: we use unique-hard attention, whereas \citet{SAL25} employs a combination of soft and hard attention.
Our setting also restricts positional information to positional masking---a simple class of positional embeddings also considered by \citet{YangCA24}, \citet{tale} and \citet{LC25}---in contrast to the arbitrary fixed-precision positional encodings admitted by \citet{SAL25}. 
Without position encodings, \citet{counting-power} recently showed that verification is undecidable for average-hard-attention and softmax-attention transformers with finite but unbounded precision.

\paragraph{Formal Verification of Transformers.}
We close by situating our findings within the broader program of formal verification---the automated analysis, verification, and explanation of transformers---which is a central concern of explainable AI \citep{survey-explainable}. Substantial practical progress has been made on verifying feed-forward neural networks, with tools developed over the last decade and benchmarked at the annual VNN competition \citep{vnn-result24}; transformers, by contrast, remain largely out of reach.
Despite the high worst-case complexity ($\EXPSPACE$-complete), we pose as a challenge bringing techniques from automated verification \citep{model-checking-book}---symbolic methods, simulation, \textit{inter alia}---to bear on transformer verification in practice. 
Because our $\EXPSPACE$-hardness proof requires transformers that encode large counters, a complementary direction is to identify subclasses that cannot encode such counters and thus admit lower-complexity verification.
A related open question is the learnability of succinct transformers, on which the empirical evidence remains mixed \citep{Garg22,NBA25,framework}.
Finally, our results are a first step toward understanding how succinct transformers can be relative to other language-acceptor models, e.g., fixed-precision softmax transformers \citep{LC25}, which UHATs overapproximate. 
We leave the succinctness of fixed-precision softmax and average-hard attention transformers as future work; see \citet{yang26} for an initial attempt.

\subsubsection*{Acknowledgments}
We thank David Chiang, Marco Sälzer, Andy Yang and anonymous reviewers for
their helpful feedback.
Pascal Bergsträßer and Anthony W.\ Lin are supported by
Deutsche Forschungsgemeinschaft (grant number \href{https://gepris.dfg.de/gepris/projekt/522843867?language=en}{522843867}) and
European Union\footnote{
	Views and opinions expressed are however those
	of the author(s) only and do not necessarily reflect those of the European
	Union or the European Research Council Executive Agency. Neither the
	European Union nor the granting authority can be held responsible for them.}
\includegraphics[width=0.75cm]{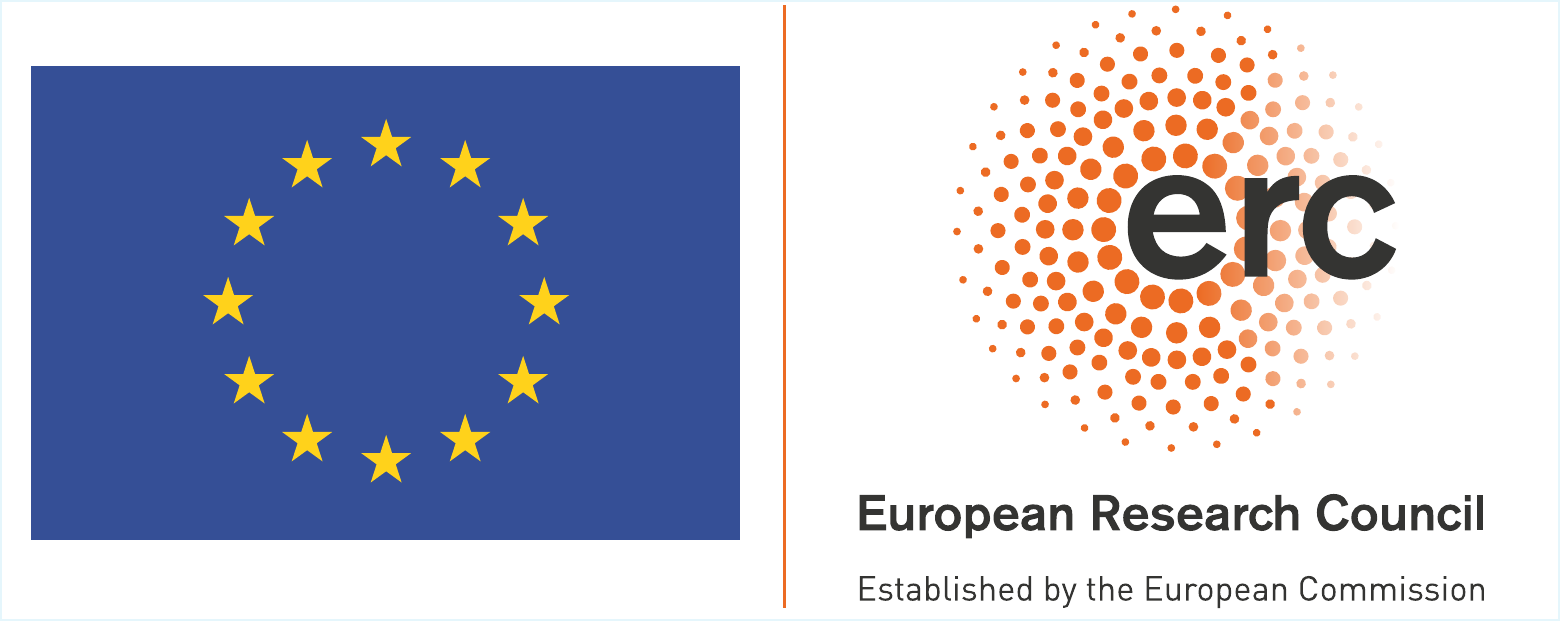}
(ERC, LASD, \href{https://doi.org/10.3030/101089343}{101089343}).

\bibliography{iclr2026_conference}

\begin{thebibliography}{44}
\providecommand{\natexlab}[1]{#1}

\bibitem[{Barcel{\'{o}} et~al.(2024)Barcel{\'{o}}, Kozachinskiy, Lin, and
  Podolskii}]{barcelo2024logical}
Pablo Barcel{\'{o}}, Alexander Kozachinskiy, Anthony~Widjaja Lin, and
  Vladimir~V. Podolskii. 2024.
\newblock \href {https://openreview.net/forum?id=gbrHZq07mq} {Logical languages
  accepted by transformer encoders with hard attention}.
\newblock In \emph{International Conference on Learning Representations}.

\bibitem[{Bergstr{\"{a}}{\ss}er et~al.(2024)Bergstr{\"{a}}{\ss}er,
  K{\"{o}}cher, Lin, and Zetzsche}]{uhat-data}
Pascal Bergstr{\"{a}}{\ss}er, Chris K{\"{o}}cher, Anthony~Widjaja Lin, and
  Georg Zetzsche. 2024.
\newblock \href
  {https://proceedings.neurips.cc/paper_files/paper/2024/file/af58a33861ac45472ea1cc5860d2b13e-Paper-Conference.pdf}
  {The power of hard attention transformers on data sequences: {A} formal
  language theoretic perspective}.
\newblock In \emph{Advances in Neural Information Processing Systems}.

\bibitem[{Brix et~al.(2024)Brix, Bak, Johnson, and Wu}]{vnn-result24}
Christopher Brix, Stanley Bak, Taylor~T. Johnson, and Haoze Wu. 2024.
\newblock \href {https://arxiv.org/abs/2412.19985} {The fifth international
  verification of neural networks competition ({VNN-COMP} 2024): Summary and
  results}.

\bibitem[{Chiang and Cholak(2022)}]{CC22}
David Chiang and Peter Cholak. 2022.
\newblock \href {https://doi.org/10.18653/v1/2022.acl-long.527} {Overcoming a
  theoretical limitation of self-attention}.
\newblock In \emph{Annual Meeting of the Association for Computational
  Linguistics}.

\bibitem[{Clarke et~al.(2018)Clarke, Grumberg, Kroening, Peled, and
  Veith}]{model-checking-book}
E.M. Clarke, O.~Grumberg, D.~Kroening, D.~Peled, and H.~Veith. 2018.
\newblock \href {https://books.google.de/books?id=ps-MEAAAQBAJ} {\emph{Model
  Checking}}, 2 edition.
\newblock {MIT} Press.

\bibitem[{Garg et~al.(2022)Garg, Tsipras, Liang, and Valiant}]{Garg22}
Shivam Garg, Dimitris Tsipras, Percy Liang, and Gregory Valiant. 2022.
\newblock \href
  {http://papers.nips.cc/paper_files/paper/2022/hash/c529dba08a146ea8d6cf715ae8930cbe-Abstract-Conference.html}
  {What can transformers learn in-context? {A} case study of simple function
  classes}.
\newblock In \emph{Advances in Neural Information Processing Systems}.

\bibitem[{Goldberg(1991)}]{goldberg1991what}
David Goldberg. 1991.
\newblock \href {https://doi.org/10.1145/103162.103163} {What every computer
  scientist should know about floating-point arithmetic}.
\newblock \emph{{ACM} Computing Surveys}, 23(1).

\bibitem[{Grohe and Schweikardt(2004)}]{GS04}
Martin Grohe and Nicole Schweikardt. 2004.
\newblock \href {https://doi.org/10.1109/LICS.2004.1319638} {The succinctness
  of first-order logic on linear orders}.
\newblock In \emph{{IEEE} Symposium on Logic in Computer Science}.

\bibitem[{Gu and Dao(2023)}]{mamba}
Albert Gu and Tri Dao. 2023.
\newblock \href {https://arxiv.org/abs/2312.00752} {{Mamba}: Linear-time
  sequence modeling with selective state spaces}.

\bibitem[{Hahn(2020)}]{hahn20}
Michael Hahn. 2020.
\newblock \href {https://doi.org/10.1162/tacl_a_00306} {Theoretical limitations
  of self-attention in neural sequence models}.
\newblock \emph{Transactions of the Association for Computational Linguistics},
  8.

\bibitem[{Hao et~al.(2022)Hao, Angluin, and Frank}]{Hao22}
Yiding Hao, Dana Angluin, and Robert Frank. 2022.
\newblock \href {https://doi.org/10.1162/tacl_a_00490} {Formal language
  recognition by hard attention transformers: Perspectives from circuit
  complexity}.
\newblock \emph{Transactions of the Association for Computational Linguistics},
  10.

\bibitem[{Huang et~al.(2020)Huang, Kroening, Ruan, Sharp, Sun, Thamo, Wu, and
  Yi}]{survey-explainable}
Xiaowei Huang, Daniel Kroening, Wenjie Ruan, James Sharp, Youcheng Sun, Emese
  Thamo, Min Wu, and Xinping Yi. 2020.
\newblock \href
  {https://www.sciencedirect.com/science/article/pii/S1574013719302527} {A
  survey of safety and trustworthiness of deep neural networks: Verification,
  testing, adversarial attack and defence, and interpretability}.
\newblock \emph{Computer Science Review}, 37.

\bibitem[{Huang et~al.(2025)Huang, Yang, Bhattamishra, Sarrof, Krebs, Zhou,
  Nakkiran, and Hahn}]{framework}
Xinting Huang, Andy Yang, Satwik Bhattamishra, Yash~Raj Sarrof, Andreas Krebs,
  Hattie Zhou, Preetum Nakkiran, and Michael Hahn. 2025.
\newblock \href {https://openreview.net/forum?id=U49N5V51rU} {A formal
  framework for understanding length generalization in transformers}.
\newblock In \emph{International Conference on Learning Representations}.

\bibitem[{Jerad et~al.(2025)Jerad, Svete, Li, and Cotterell}]{tale}
Selim Jerad, Anej Svete, Jiaoda Li, and Ryan Cotterell. 2025.
\newblock \href {https://doi.org/10.18653/v1/2025.acl-short.76} {Unique hard
  attention: {A} tale of two sides}.
\newblock In \emph{Annual Meeting of the Association for Computational
  Linguistics}.

\bibitem[{Kozen(1997)}]{Kozen}
Dexter Kozen. 1997.
\newblock \href {https://doi.org/10.1007/978-1-4612-1844-9} {\emph{Automata and
  Computability}}.
\newblock Springer.

\bibitem[{Li and Cotterell(2025)}]{LC25}
Jiaoda Li and Ryan Cotterell. 2025.
\newblock \href {https://neurips.cc/virtual/2025/poster/120165} {Characterizing
  the expressivity of fixed-precision transformer language models}.
\newblock In \emph{Advances in Neural Information Processing Systems}.

\bibitem[{Li and Cotterell(2026)}]{LC26}
Jiaoda Li and Ryan Cotterell. 2026.
\newblock \href {https://arxiv.org/abs/2605.00768} {Characterizing the
  expressivity of local attention in transformers}.
\newblock In \emph{Annual Meeting of the Association for Computational
  Linguistics}.

\bibitem[{Libkin(2004)}]{libkin-book}
Leonid Libkin. 2004.
\newblock \href {https://doi.org/10.1007/978-3-662-07003-1} {\emph{Elements of
  Finite Model Theory}}.
\newblock Springer.

\bibitem[{Maler and Pnueli(1990)}]{MalerP90}
Oded Maler and Amir Pnueli. 1990.
\newblock \href {https://doi.org/10.1109/FSCS.1990.89589} {Tight bounds on the
  complexity of cascaded decomposition of automata}.
\newblock In \emph{Symposium on Foundations of Computer Science}.

\bibitem[{McNaughton and Papert(1971)}]{counter-free-book}
Robert McNaughton and Seymour Papert. 1971.
\newblock \href {https://archive.org/details/CounterFre_00_McNa}
  {\emph{Counter-Free Automata}}.
\newblock {MIT} Press.

\bibitem[{Merrill et~al.(2024)Merrill, Petty, and Sabharwal}]{ssm-will}
William Merrill, Jackson Petty, and Ashish Sabharwal. 2024.
\newblock \href {https://openreview.net/forum?id=QZgo9JZpLq} {The illusion of
  state in state-space models}.
\newblock In \emph{International Conference on Machine Learning}.

\bibitem[{Merrill et~al.(2020)Merrill, Weiss, Goldberg, Schwartz, Smith, and
  Yahav}]{RNN-hierarchy}
William Merrill, Gail Weiss, Yoav Goldberg, Roy Schwartz, Noah~A. Smith, and
  Eran Yahav. 2020.
\newblock \href {https://doi.org/10.18653/v1/2020.acl-main.43} {A formal
  hierarchy of {RNN} architectures}.
\newblock In \emph{Annual Meeting of the Association for Computational
  Linguistics}.

\bibitem[{Minsky(1967)}]{minsky1967computation}
Marvin~L. Minsky. 1967.
\newblock \href {https://archive.org/details/computationfinit0000mins}
  {\emph{Computation: Finite and Infinite Machines}}.
\newblock Prentice-Hall.

\bibitem[{Naim et~al.(2025)Naim, Bolte, and Asher}]{NBA25}
Omar Naim, Jerome Bolte, and Nicholas Asher. 2025.
\newblock \href {https://arxiv.org/abs/2502.03503} {Analyzing limits for
  in-context learning}.
\newblock In \emph{What Can('t) Transformers Do? Workshop at the Conference on
  Neural Information Processing Systems}.

\bibitem[{P{\'{e}}rez et~al.(2021)P{\'{e}}rez, Barcel{\'{o}}, and
  Marinkovic}]{perez}
Jorge P{\'{e}}rez, Pablo Barcel{\'{o}}, and Javier Marinkovic. 2021.
\newblock \href {https://jmlr.org/papers/v22/20-302.html} {Attention is
  {Turing}-complete}.
\newblock \emph{Journal of Machine Learning Research}, 22.

\bibitem[{Pnueli(1977)}]{pnueli-ltl}
Amir Pnueli. 1977.
\newblock \href {https://doi.org/10.1109/SFCS.1977.32} {The temporal logic of
  programs}.
\newblock In \emph{Symposium on Foundations of Computer Science}.

\bibitem[{S{\"{a}}lzer et~al.(2025)S{\"{a}}lzer, Alsmann, and Lange}]{SAL25}
Marco S{\"{a}}lzer, Eric Alsmann, and Martin Lange. 2025.
\newblock \href {https://openreview.net/forum?id=VVO3ApdMUE} {Transformer
  encoder satisfiability: Complexity and impact on formal reasoning}.
\newblock In \emph{International Conference on Learning Representations}.

\bibitem[{S{\"a}lzer et~al.(2026)S{\"a}lzer, K{\"o}cher, Kozachinskiy,
  Zetzsche, and Lin}]{counting-power}
Marco S{\"a}lzer, Chris K{\"o}cher, Alexander Kozachinskiy, Georg Zetzsche, and
  Anthony~Widjaja Lin. 2026.
\newblock \href {https://openreview.net/forum?id=IAFwK6NyrP} {The counting
  power of transformers}.
\newblock In \emph{International Conference on Learning Representations}.

\bibitem[{Schwarzentruber(2019)}]{Schwarzentruber19}
Fran{\c{c}}ois Schwarzentruber. 2019.
\newblock \href {https://arxiv.org/abs/1907.00102} {The complexity of tiling
  problems}.
\newblock arXiv preprint arXiv:1907.00102.

\bibitem[{Siegelmann and Sontag(1995)}]{SS95}
Hava~T. Siegelmann and Eduardo~D. Sontag. 1995.
\newblock \href {https://doi.org/10.1006/jcss.1995.1013} {On the computational
  power of neural nets}.
\newblock \emph{Journal of Computer and System Sciences}, 50(1).

\bibitem[{Sipser(1997)}]{Sipser97}
Michael Sipser. 1997.
\newblock \href {https://archive.org/details/introductiontoth00sips}
  {\emph{Introduction to the Theory of Computation}}.
\newblock {PWS} Publishing.

\bibitem[{Sistla and Clarke(1985)}]{SistlaC85}
A.~Prasad Sistla and Edmund~M. Clarke. 1985.
\newblock \href {https://doi.org/10.1145/3828.3837} {The complexity of
  propositional linear temporal logics}.
\newblock \emph{Journal of the {ACM}}, 32(3).

\bibitem[{Stockmeyer(1974)}]{Stockmeyer-thesis}
Larry~Joseph Stockmeyer. 1974.
\newblock \href {http://hdl.handle.net/1721.1/15540} {\emph{The Complexity of
  Decision Problems in Automata Theory and Logic}}.
\newblock Ph.D. thesis, Massachusetts Institute of Technology.

\bibitem[{Straubing(1994)}]{straubing-book}
Howard Straubing. 1994.
\newblock \href {https://doi.org/10.1007/978-1-4612-0289-9} {\emph{Finite
  Automata, Formal Logic, and Circuit Complexity}}.
\newblock Birkh{\"{a}}user.

\bibitem[{Strobl et~al.(2024)Strobl, Merrill, Weiss, Chiang, and
  Angluin}]{transformers_survey}
Lena Strobl, William Merrill, Gail Weiss, David Chiang, and Dana Angluin. 2024.
\newblock \href {https://doi.org/10.1162/tacl_a_00663} {What formal languages
  can transformers express? {A} survey}.
\newblock \emph{Transactions of the Association for Computational Linguistics},
  12.

\bibitem[{Svete and Cotterell(2023)}]{svete2023recurrent}
Anej Svete and Ryan Cotterell. 2023.
\newblock \href {https://aclanthology.org/2023.emnlp-main.502/} {Recurrent
  neural language models as probabilistic finite-state automata}.
\newblock In \emph{Conference on Empirical Methods in Natural Language
  Processing}.

\bibitem[{van Emde~Boas(1997)}]{Boas97}
Peter van Emde~Boas. 1997.
\newblock \href {https://doi.org/10.1201/9780429187490} {The convenience of
  tilings}.
\newblock In \emph{Complexity, Logic, and Recursion Theory}. CRC Press.

\bibitem[{Vardi and Wolper(1994)}]{VardiW94}
Moshe~Y. Vardi and Pierre Wolper. 1994.
\newblock \href {https://doi.org/10.1006/inco.1994.1092} {Reasoning about
  infinite computations}.
\newblock \emph{Information and Computation}, 115(1).

\bibitem[{Vaswani et~al.(2017)Vaswani, Shazeer, Parmar, Uszkoreit, Jones,
  Gomez, Kaiser, and Polosukhin}]{Vaswani17}
Ashish Vaswani, Noam Shazeer, Niki Parmar, Jakob Uszkoreit, Llion Jones,
  Aidan~N. Gomez, Lukasz Kaiser, and Illia Polosukhin. 2017.
\newblock \href
  {https://proceedings.neurips.cc/paper/2017/hash/3f5ee243547dee91fbd053c1c4a845aa-Abstract.html}
  {Attention is all you need}.
\newblock In \emph{Advances in Neural Information Processing Systems}.

\bibitem[{Weiss et~al.(2018)Weiss, Goldberg, and Yahav}]{WGY18}
Gail Weiss, Yoav Goldberg, and Eran Yahav. 2018.
\newblock \href {https://aclanthology.org/P18-2117/} {On the practical
  computational power of finite precision {RNNs} for language recognition}.
\newblock In \emph{Annual Meeting of the Association for Computational
  Linguistics}.

\bibitem[{Weiss et~al.(2024)Weiss, Goldberg, and Yahav}]{WGY24}
Gail Weiss, Yoav Goldberg, and Eran Yahav. 2024.
\newblock \href {https://doi.org/10.1007/s10994-022-06163-2} {Extracting
  automata from recurrent neural networks using queries and counterexamples
  (extended version)}.
\newblock \emph{Machine Learning}, 113(5).

\bibitem[{Yang et~al.(2026)Yang, Bergstr{\"{a}}{\ss}er, Zetzsche, Chiang, and
  Lin}]{yang26}
Andy Yang, Pascal Bergstr{\"{a}}{\ss}er, Georg Zetzsche, David Chiang, and
  Anthony~Widjaja Lin. 2026.
\newblock \href {https://arxiv.org/abs/2603.02238} {Length generalization
  bounds for transformers}.
\newblock \emph{Preprint}, arXiv:2603.02238.
\newblock (Accepted at ICML'26).

\bibitem[{Yang et~al.(2024)Yang, Chiang, and Angluin}]{YangCA24}
Andy Yang, David Chiang, and Dana Angluin. 2024.
\newblock \href
  {http://papers.nips.cc/paper_files/paper/2024/hash/13d7f172259b11b230cc5da8768abc5f-Abstract-Conference.html}
  {Masked hard-attention transformers recognize exactly the star-free
  languages}.
\newblock In \emph{Advances in Neural Information Processing Systems}.

\bibitem[{Zhou et~al.(2024)Zhou, Bradley, Littwin, Razin, Saremi, Susskind,
  Bengio, and Nakkiran}]{raspl}
Hattie Zhou, Arwen Bradley, Etai Littwin, Noam Razin, Omid Saremi, Joshua~M.
  Susskind, Samy Bengio, and Preetum Nakkiran. 2024.
\newblock \href {https://openreview.net/forum?id=AssIuHnmHX} {What algorithms
  can transformers learn? {A} study in length generalization}.
\newblock In \emph{International Conference on Learning Representations}.

\end{thebibliography}
\bibliographystyle{acl_natbib}

\appendix
\newpage
\section{Proofs}
\label{app:emptiness}
\subsection{Proof of Theorem \ref{thm:UHAT-emptiness}}\label{app:UHAT-emptiness}
\thmUHATemptiness*
\begin{proof}
The theorem asserts $\EXPSPACE$-completeness of the non-emptiness problem for two formalisms simultaneously: B-RASP programs and UHATs. We prove the two directions of completeness separately.

\paragraph{Hardness (lower bound).}
We establish $\EXPSPACE$-hardness first for B-RASP programs, then transfer it to UHATs.
\begin{itemize}[itemsep=1pt, topsep=2pt, parsep=0pt, leftmargin=2em]
    \item \textbf{B-RASP}. For B-RASP programs, \zcref{prop:tiling} states that the $2^N$-tiling problem is $\EXPSPACE$-complete, so it is $\EXPSPACE$-hard.
By \zcref{lem:tiling-brasp}, any instance of the $2^N$-tiling problem can be transformed, in time polynomial in $N$, into a B-RASP program whose recognized language is non-empty if and only if the original tiling instance has a solution. Composing these two results yields a polynomial-time reduction from an $\EXPSPACE$-hard problem to B-RASP non-emptiness.
Thus, B-RASP non-emptiness is itself $\EXPSPACE$-hard.

\item \textbf{UHAT}.
For UHATs, we further compose the previous reduction with the polynomial-time, language-preserving translation from B-RASP to UHATs of \zcref{lem:brasp-uhat}. The composed reduction is again polynomial-time, so the $\EXPSPACE$-hardness of $2^N$-tiling transfers to UHAT non-emptiness; this is the content of \zcref{prop:UHAT-hardness}.
\end{itemize}

\paragraph{Membership (upper bound).}
We show that the non-emptiness problem for both formalisms lies in $\EXPSPACE$ by translating to LTL and invoking the Sistla--Clarke decision procedure.

For B-RASP programs, the construction of \citet{YangCA24} converts any B-RASP program $\brasp$ into an equivalent LTL formula $\varphi_{\brasp}$ in time exponential in $|\brasp|$; consequently $|\varphi_{\brasp}|$ is itself at most exponential in $|\brasp|$.

For UHATs, the analogous translation is supplied by \zcref{prop:UHAT-LTL}, which turns any UHAT $\trans$ into an equivalent LTL formula $\varphi_\trans$ in exponential time (and hence of at most exponential size). The proof of \zcref{prop:UHAT-LTL} relies crucially on \zcref{lem:UHAT-values}: the latter guarantees that all rational values arising in the computation of $\trans$ admit representations of polynomially many bits, which is what makes the set of layer-wise representations enumerable in exponential time.

In both cases we have, in exponential time, reduced the original non-emptiness problem to LTL satisfiability for a formula of size at most exponential in the size of the original UHAT or B-RASP program.
\citet{SistlaC85} prove that LTL satisfiability is decidable in space polynomial in the formula size; applied to $\varphi_{\brasp}$ or $\varphi_\trans$, this uses space polynomial in an exponential quantity, that is, space exponential in $|\brasp|$ or $|\trans|$. The overall decision procedure therefore runs in $\EXPSPACE$.

Combining the two halves yields $\EXPSPACE$-completeness of non-emptiness for both B-RASP programs and UHATs, as claimed.
\end{proof}

\subsection{Proof of Lemma \ref{lem:tiling-brasp}}\label{app:tiling-brasp}
\lemTilingBrasp*
\begin{proof}
Fix a $2^N$-tiling problem instance $\TI = (N, \tileset, \tilefin)$ as in \zcref{prob:tiling}.
We construct a B-RASP program $\brasp_\TI$ of size polynomial in $N$ and $|\tileset|$ such that $\Lang(\brasp_\TI) \ne \emptyset$ if and only if $\TI$ admits a solution $\tau \colon [2^N] \times [M] \to \tileset$ in the sense of \zcref{prob:tiling}.

\paragraph{Encoding.}
We encode a candidate $\tau$ as a word over the alphabet $\alphabet \bnfdef \tileset \cup \{\sym{0}, \sym{1}, \sym{\#}\}$.
Define the per-cell encoding $\enc \colon [2^N] \times [M] \to \alphabet\kplus$ by
\begin{equation}\label{eq:enc}
\enc(i,j) \bnfdef \bin_N(i-1)\, \tau(i,j)\, \sym{\#},
\end{equation}
where $\bin_N(i-1) \in \{\sym{0},\sym{1}\}^N$ is the $N$-bit binary representation of $i-1$, most significant bit first.
The encoding of the full configuration scans columns within each row in order:
\begin{equation}\label{eq:enc-full}
\enc(\tau) \bnfdef \enc(1,1) \cdots \enc(2^N,1)\, \enc(1,2) \cdots \enc(2^N,M).
\end{equation}
Let $\Lang_\TI \bnfdef \{\enc(\tau) \mid \tau \text{ is a solution of } \TI\}$. It suffices to construct a B-RASP program that recognizes $\Lang_\TI$.\looseness=-1

\paragraph{Plan.}
Throughout, let $\str{\bword} = \sym{a}_1 \cdots \sym{a}_L \in \alphabet\kplus$ denote the input, with $L \bnfdef |\str{\bword}|$.
We build $\brasp_\TI$ as a conjunction of five \emph{gadgets}, each evaluated at the last position $L$:
\begin{itemize}[itemsep=2pt, topsep=2pt, parsep=0pt, leftmargin=2em]
    \item \emph{Gadget A---shape}: $\str{\bword} \in (\{\sym{0},\sym{1}\}^N\, \tileset\, \sym{\#})\kstar$.
    \item \emph{Gadget B---column counter}: consecutive $N$-bit blocks count $0, 1, \dots, 2^N-1, 0, \dots$.
    \item \emph{Gadget C---final tile}: $\str{\bword}$ ends with $\sym{1}^N\, \tilefin\, \sym{\#}$ (\zcref{itm:t-fin}).
    \item \emph{Gadget D---boundary conditions}: \zcref{itm:bot-top} and \zcref{itm:first-last}.
    \item \emph{Gadget E---adjacency}: \zcref{itm:right-left} and \zcref{itm:up-down}.
\end{itemize}
The output predicate is $Y(i) \bnfdef A(i) \wedge B(i) \wedge C(i) \wedge D(i) \wedge E(i)$, and $\brasp_\TI$ accepts $\str{\bword}$ iff $Y(L) = 1$. We define each gadget below and verify its soundness at $L$.

\paragraph{Gadget A: well-formed shape.}
To check whether $\str{\bword} \in (\{\sym{0},\sym{1}\}^N\, \tileset\, \sym{\#})\kstar$, we construct the following B-RASP predicates:
\begin{subequations}\label{eq:gadget-A}
\begin{align}
A_{\tileset}(i) &\bnfdef\ \maxatt_j[j<i,1]\ \bigvee_{\tile \in \tileset} \Qprop{\tile}(j) : 0 \label{eq:A-T}\\
A_{\mathrm{bit},1}(i) &\bnfdef\ \maxatt_j[j<i,1]\ \Qprop{\sym{0}}(j) \vee \Qprop{\sym{1}}(j) : 0 \label{eq:A-bit-1}\\
A_{\mathrm{bit},k}(i) &\bnfdef\ \maxatt_j[j<i,1]\ A_{\mathrm{bit},k-1}(j) : 0 \qquad \text{for } k = 2,\dots,N \label{eq:A-bit-k}\\
A_{\#,1}(i) &\bnfdef\ \maxatt_j[j<i,1]\ \Qprop{\sym{\#}}(j) : 1 \label{eq:A-h-1}\\
A_{\#,k}(i) &\bnfdef\ \maxatt_j[j<i,1]\ A_{\#,k-1}(j) : 1 \qquad \text{for } k = 2,\dots,N+1 \label{eq:A-h-k}\\
A_{\enc}(i) &\bnfdef \left(\Qprop{\sym{\#}}(i) \to A_{\tileset}(i)\right) \wedge \left(\left(\bigvee_{\tile \in \tileset} \Qprop{\tile}(i)\right) \to \left(\bigwedge_{k=1}^N A_{\mathrm{bit},k}(i)\right) \wedge A_{\#,N+1}(i)\right) \label{eq:A-enc}
\end{align}
\end{subequations}
We aggregate $A_\enc$ across all positions:
\begin{equation}
A(i) \bnfdef\ \maxatt_j[j<i,\neg A_\enc(j)]\ 0 : A_\enc(i).
\end{equation}
Then $A(L) = 1$ iff $A_\enc(i) = 1$ at every position, which holds iff $\str{\bword}$ matches $(\{\sym{0},\sym{1}\}^N\, \tileset\, \sym{\#})\kstar$ up to the requirement that the last symbol is $\sym{\#}$, which is enforced by Gadget C below.

\paragraph{Gadget B: column counter.}
We check that the $N$-bit binary blocks separated by $\sym{\#}$ encode the integers $0, 1, \dots, 2^N-1, 0, \dots$ in order, generalizing the increment predicate of \zcref{ex:tiling} from $4$ bits to $N$ bits.
\begin{subequations}\label{eq:gadget-B}
\begin{align}
B_1(i) \bnfdef&\ \maxatt_j[j<i,\Qprop{\sym{0}}(j) \vee \Qprop{\sym{1}}(j)]\ \Qprop{\sym{1}}(j) : 0 \label{eq:B-1}\\
B_k(i) \bnfdef&\ \maxatt_j[j<i,\Qprop{\sym{0}}(j) \vee \Qprop{\sym{1}}(j)]\ B_{k-1}(j) : 0 \qquad \text{for } k = 2,\dots,N \label{eq:B-k}\\
B_{+1}(i) \bnfdef&\ \maxatt_j[j<i,\Qprop{\sym{\#}}(j)]\ \bigvee_{k=1}^N \Big(\bigwedge_{r=1}^{k-1} \neg B_r(i) \wedge B_r(j)\notag\\
&\quad \wedge B_k(i) \wedge \neg B_k(j) \wedge \bigwedge_{r=k+1}^{N} B_r(i) \leftrightarrow B_r(j)\Big) : 0 \label{eq:B-+1}\\
B_{1 \to 0}(i) \bnfdef&\ \maxatt_j[j<i,\Qprop{\sym{\#}}(j)]\ \bigwedge_{k=1}^N \neg B_k(i) \wedge B_k(j) : \bigwedge_{k=1}^N \neg B_k(i) \label{eq:B-rollover}\\
B(i) \bnfdef&\ \maxatt_j[j<i,\Qprop{\sym{\#}}(j) \wedge \neg B_{1 \to 0}(j) \wedge \neg B_{+1}(j)]\ 0 : B_{1 \to 0}(i) \wedge B_{+1}(i) \label{eq:B-final}
\end{align}
\end{subequations}
Then $B(L) = 1$ iff the binary blocks count up correctly.

\paragraph{Gadget C: final tile.}
For each tile $\tile \in \tileset$, we record whether the most recent prior tile equals $\tile$ via the predicate $T_{\tile}$ at every position, and use this to check that $\str{\bword}$ ends with $\sym{1}^N\, \tilefin\, \sym{\#}$ (\zcref{itm:t-fin}):
\begin{subequations}\label{eq:gadget-C}
\begin{align}
T_{\tile}(i) \bnfdef&\ \maxatt_j[j<i,\bigvee_{\tile' \in \tileset} \Qprop{\tile'}(j)]\ \Qprop{\tile}(j) : 0 \qquad \text{for all } \tile \in \tileset \label{eq:C-T}\\
C(i) \bnfdef&\ \Qprop{\sym{\#}}(i) \wedge T_{\tilefin}(i) \wedge \bigwedge_{k=1}^N B_k(i) \label{eq:C-final}
\end{align}
\end{subequations}
Then $C(L) = 1$ iff $\str{\bword}$ ends with $\sym{1}^N\, \tilefin\, \sym{\#}$.

\paragraph{Gadget D: boundary conditions.}
We enforce \zcref{itm:bot-top} and \zcref{itm:first-last} of \zcref{prob:tiling}: the bottom row uses tiles with $\tdown = 0$, the top row uses tiles with $\tup = 0$, and the leftmost (resp.\ rightmost) column uses tiles with $\tleft = 0$ (resp.\ $\tright = 0$).
\begin{subequations}\label{eq:gadget-D}
\begin{align}
D_\bot(i) &\bnfdef\, \maxatt_j[j<i,\Qprop{\sym{\#}}(j) \wedge \bigwedge_{k=1}^N B_k(i) \leftrightarrow B_k(j)]\ 1 : \smashoperator[r]{\bigvee_{\substack{\tile \in \tileset, \\ \tdown(\tile) = 0}}} T_{\tile}(i) \label{eq:D-bot}\\
D_\top(i) &\bnfdef\, \maxatt_j\Big[j<i,\, \Qprop{\sym{\#}}(j) \wedge \left(\smashoperator[r]{\bigvee_{\substack{\tile \in \tileset, \\ \tup(\tile) \ne 0}}} T_{\tile}(j) \vee \left(\bigwedge_{k=1}^N \neg B_k(j)\right)\right)\Big]\notag\\
&\qquad\qquad\qquad\qquad \qquad\qquad\qquad\qquad \left(\smashoperator[r]{\bigvee_{\substack{\tile \in \tileset, \\ \tup(\tile) = 0}}} T_{\tile}(j)\right) \wedge \left(\smashoperator[r]{\bigvee_{\substack{\tile \in \tileset, \\ \tup(\tile) = 0}}} T_{\tile}(i)\right) : 0 \label{eq:D-top}\\
D_\vdash(i) &\bnfdef \left(\bigwedge_{k=1}^N \neg B_k(i)\right) \to \smashoperator[r]{\bigvee_{\substack{\tile \in \tileset, \\ \tleft(\tile) = 0}}} T_{\tile}(i) \label{eq:D-left}\\
D_\dashv(i) &\bnfdef \left(\bigwedge_{k=1}^N B_k(i)\right) \to \smashoperator[r]{\bigvee_{\substack{\tile \in \tileset, \\ \tright(\tile) = 0}}} T_{\tile}(i) \label{eq:D-right}\\
D(i) &\bnfdef\, \maxatt_j[j<i,\, \Qprop{\sym{\#}}(j) \wedge \neg\left(D_\bot(j) \wedge D_\vdash(j) \wedge D_\dashv(j)\right)]\ 0 :  D_\bot(i) \wedge D_\top(i) \wedge D_\vdash(i) \wedge D_\dashv(i) \label{eq:D-final}
\end{align}
\end{subequations}
Then $D(L) = 1$ iff \zcref{itm:bot-top} and \zcref{itm:first-last} hold.

\paragraph{Gadget E: adjacency.}
We enforce \zcref{itm:right-left} and \zcref{itm:up-down}: horizontally adjacent tiles agree on $(\tright,\tleft)$, and vertically adjacent tiles (same column, consecutive rows) agree on $(\tdown,\tup)$ as follows:
\begin{subequations}\label{eq:gadget-E}
\begin{align}
E_\leftarrow(i) &\bnfdef\ \maxatt_j[j<i,\,\Qprop{\sym{\#}}(j)] \left(\bigvee_{k=1}^N B_k(i)\right) \to \smashoperator[r]{\bigvee_{\substack{\tile,\tile' \in \tileset, \\ \tleft(\tile) = \tright(\tile')}}}  \left(T_{\tile}(i) \wedge T_{\tile'}(j)\right) : 1 \label{eq:E-left}\\
E_\downarrow(i) &\bnfdef\ \maxatt_j[j<i,\Qprop{\sym{\#}}(j) \wedge \bigwedge_{k=1}^N B_k(i) \leftrightarrow B_k(j)] \smashoperator[r]{\bigvee_{\substack{\tile,\tile' \in \tileset, \\ \tdown(\tile) = \tup(\tile')}}} T_{\tile}(i) \wedge T_{\tile'}(j) : 1 \label{eq:E-down}\\
E(i) &\bnfdef\ \maxatt_j[j<i,\, \Qprop{\sym{\#}}(j) \wedge \neg\left(E_\downarrow(j) \wedge E_\leftarrow(j)\right)]\ 0 : E_\downarrow(i) \wedge E_\leftarrow(i) \label{eq:E-final}
\end{align}
\end{subequations}
Then $E(L) = 1$ iff \zcref{itm:right-left} and \zcref{itm:up-down} hold.

\paragraph{Wrap-up.}
Define the output predicate
\begin{equation}
Y(i) \bnfdef A(i) \wedge B(i) \wedge C(i) \wedge D(i) \wedge E(i),
\end{equation}
and let $\brasp_\TI$ accept iff $Y(L) = 1$. By the soundness of each gadget, $\Lang(\brasp_\TI) = \Lang_\TI$, so $\Lang(\brasp_\TI) \ne \emptyset$ iff $\TI$ admits a solution. Each gadget uses $O(N)$ predicates of size polynomial in $N$ and $|\tileset|$, so $|\brasp_\TI|$ is polynomial in $N$ and $|\tileset|$, hence in the size of $\TI$.
\end{proof}

\subsection{Proof of Lemma \ref{lem:brasp-uhat}}\label{app:brasp-uhat}
\lemBraspUHAT*
\begin{proof}
Let $\brasp = (P_1,\ldots,P_\Pi)$ be a B-RASP program over the alphabet $\alphabet = \{\sym{a}_1,\dots,\sym{a}_{|\alphabet|}\}$,
where $P_t$ is the initial vector $\Qprop{\sym{a}_t}$ for all $1 \le t \le |\alphabet|$.
We construct a UHAT over $\alphabet$ that recognizes the same language as $\brasp$.
We use a one-hot symbol embedding $\emb \colon \alphabet \to \{0,1\}^{|\alphabet|}$, i.e.,
$\emb(\sym{a}_t) \bnfdef \boldsymbol{e}_t$ for all $1 \le t \le |\alphabet|$, where $\boldsymbol{e}_t$ denotes the $t^{\text{th}}$ unit vector.
Then $P_t(i)$ coincides with the $t^{\text{th}}$ component of the $i^{\text{th}}$ input vector of the UHAT after the symbol embedding is applied.
The UHAT will preserve these components in each layer and will gradually add new components
to store the value of $P_t(i)$ for all $|\alphabet| < t \le \Pi$.
So assume we already defined the layers of the UHAT that compute the vector $(P_1(i),\dots,P_t(i))$
at position $i$ for $|\alphabet| \le t < \Pi$.
We now define additional layers whose output will be $(P_1(i),\dots,P_{t+1}(i))$.

We first consider the case where $P_{t+1}$ is a position-wise operation, i.e.,
$P_{t+1}(i)$ is defined by a Boolean combination of $P_1(i),\dots,P_t(i)$.
We define UHAT layers to compute the result of that Boolean combination bottom-up.
Assume we already defined layers that output $(R_1(i),\dots,R_s(i))$,
where $s \ge t$ and $R_1(i),\dots,R_s(i)$ contain $P_1(i),\dots,P_t(i)$ and the results of previously computed subformulas.
To compute the result of $\neg R_k(i)$ for some $k \in [s]$,
we add an attention layer that just forwards $1-R_k(i)$ at position $i$ in an additional component
while leaving the first $s$ components unchanged.
To compute $R_k(i) \wedge R_\ell(i)$ for some $k,\ell \in [s]$,
we first use an attention layer to forward $R_k(i) + R_\ell(i)-1$ in an additional component
followed by a ReLU layer that forwards the result of $\max\{0,R_k(i) + R_\ell(i)-1\}$ in this additional component,
again leaving the first $s$ components unchanged.
We do not have to deal with $R_k(i) \vee R_\ell(i)$, since it can be rewritten as $\neg(\neg R_k(i) \wedge \neg(R_\ell(i))$.
After computing the results of all subformulas, we add an additional attention layer to only forward
$(P_1(i),\dots,P_{t+1}(i))$, i.e., removing the intermediate results.
Observe that a Boolean combination has only linearly many subformulas.

Let us now consider the case where $P_{t+1}$ is an attention operation of the form
\begin{equation}
P_{t+1}(i) \bnfdef\ \minmaxatt_j [M(i,j),  \score(j) \wedge \bigwedge_{k \in K} P_k(i) \leftrightarrow P_k(j)]\ V(i,j) : D(i)
\end{equation}
as in the statement of \zcref{lem:brasp-uhat}.
Throughout the construction below, $K$, $M$, $\minmaxatt$, $\score$, $V$, and $D$ are the parameters of this input attention operation, as bound in the lemma statement; in particular, $K \subseteq \{1, \dots, t\}$ is the index set of predicates whose values at $i$ and $j$ are compared for equality in the operation's score predicate.
We first use additional layers as in the case of position-wise operations to compute the result of $\neg \score(i)$
at every position $i$ in an additional component to output $(P_1(i),\dots,P_{t}(i),1 - \score(i))$.
Next we use an attention layer to add the result of $1-P_k(i)$ for all $k \in K$ at every position $i$ in additional components.
We then add an attention layer that uses mask predicate $M$, tie-breaking according to $\minmaxatt$, and attention score
\begin{equation}
\Big(\sum_{k \in K} \big(P_k(i) P_k(j) + (1-P_k(i))(1-P_k(j))\big)\Big) - \big(1-\score(j)\big)
\end{equation}
which is equal to $|\{k \in K \mid P_k(i) = P_k(j)\}| - (1-\score(j))$ since
\begin{equation}
P_k(i) P_k(j) + (1-P_k(i))(1-P_k(j)) = \begin{cases}
1, &\text{if } P_k(i) = P_k(j)\\
0, &\text{otherwise.}
\end{cases}
\end{equation}
Thus, the score is maximized (equal to $|K|$) if $P_k(i) = P_k(j)$ for all $k \in K$ and $\score(j) = 1$.
For every position $i$ let $o(i)$ be the position of the vector that maximizes the attention score with respect to $i$.
The attention layer forwards the vector $(P_1(i),\dots,P_t(i),P_1(o(i)),\dots,P_t(o(i)),\score(o(i)))$ at position $i$.
We now compute the result of
\begin{equation}
R(i) \bnfdef \score(o(i)) \wedge \bigwedge_{k \in K} P_k(i) \leftrightarrow P_k(o(i))
\end{equation}
at every position $i$ as in the case of position-wise operations and
forward the vector $(P_1(i),\dots,P_t(i),P_1(o(i)),\dots,P_t(o(i)),R(i))$.
Finally, we compute
\begin{equation}
\big(R(i) \wedge V(i,o(i))\big) \vee \big(\neg R(i) \wedge D(i)\big)
\end{equation}
by again using additional layers as in the case of position-wise operations,
whose result is exactly $P_{t+1}(i)$.
We then forward $(P_1(i),\dots,P_{t+1}(i))$.

It remains to describe when the UHAT accepts.
If $P_t$ is the output vector of $\brasp$,
then we stop the construction of the UHAT after the layers to compute $P_t$ are constructed.
The acceptance vector of the UHAT is then defined as $\boldsymbol{e}_t$, i.e., the $t^{\text{th}}$ unit vector.
This means that the UHAT accepts if and only if $\langle \boldsymbol{e}_t,(P_1(N),\dots,P_t(N)) \rangle > 0$,
which holds if and only if $P_t(N) = 1$, where $N$ is the length of the input.

We observe that the resulting UHAT only has polynomially many layers
since the result of each operation $P_t$ can be computed using an additional number of layers
that is linear in the description size of $P_t$.
\end{proof}


\subsection{Proof of Proposition \ref{lem:UHAT-values}}\label{app:UHAT-values}
\propUHATvalues*
\begin{proof}
\medskip\noindent\textbf{Part 1: bounding the denominator.}\quad
Let $K$ be the number of rationals in the description of $\trans$, i.e., the entries of the embedding $\emb(\sym{a})$ for $\sym{a} \in \alphabet$, the entries of every affine transformation $\matA_\ell, \matB_\ell, \matC_\ell$ at each layer $\ell$, and the entries of the acceptance vector $\bt$. Each of these rationals is part of $\trans$'s binary encoding, so $K \le |\trans|$ and the bit-length $b$ of any individual rational satisfies $b \le |\trans|$ as well.
Let $D$ be the least common multiple (LCM) of the denominators of those $K$ rationals.
The LCM of $K$ integers each of bit-length at most $b$ has bit-length at most $K \cdot b$ (an upper bound is the product of the integers), so $D$ has bit-length at most $|\trans|^2$. By construction, every embedding entry and every coefficient of an affine transformation can be written with denominator dividing~$D$.

We now show by induction on the layer number $\ell$ that there is a denominator $d_\ell$, of polynomial bit-length, common to every value at layer $\ell$. The base case $\ell = 0$ is the embedding: $d_0 \defeq D$ works.
For the inductive step, suppose every layer-$\ell$ value has denominator dividing $d_\ell$. An attention layer takes an affine combination of layer-$\ell$ values, weighted by coefficients drawn from the description of $\trans$. Each weight has denominator dividing $D$, and each input has denominator dividing $d_\ell$. The product of two fractions with denominators dividing $D$ and $d_\ell$ has denominator dividing $D \cdot d_\ell$; a sum of such products and the bias term (denominator dividing $D$, which divides $D \cdot d_\ell$) shares the same denominator. So we may take $d_{\ell+1} \defeq D \cdot d_\ell$. A ReLU layer applies $\max\{0, \cdot\}$ pointwise, which leaves denominators unchanged, so $d_{\ell+1} \defeq d_\ell$ suffices for that case. Iterating over at most $|\trans|$ many layers, the final denominator is at most $D^{|\trans|+1}$, with bit-length $(|\trans|+1) \log D$, polynomial in $|\trans|$.\looseness=-1

\medskip\noindent \textbf{Part 2: bounding the numerator.}\quad
At an attention layer of width $R$, each output numerator is an affine combination of $2R$ layer-$\ell$ numerators with integer weights and a bias each of bit-length at most $\log D$ (the entries of $\matC$ scaled by $D$). The magnitude is therefore at most $2R \cdot 2^{\log D}$ times the largest layer-$\ell$ numerator's magnitude---an \emph{additive} per-layer bit-length increase of $\log R + \log D + O(1)$; ReLU leaves bit-lengths unchanged. Since layer-$0$ numerators have bit-length at most $|\trans| + \log D$, iterating across the at-most-$|\trans|$ many layers gives a final bit-length of $O(|\trans| \cdot (\log R + \log D))$, polynomial in $|\trans|$.

Note that the additive (rather than multiplicative) per-layer growth above follows because the forwarded vector at an attention layer is $\matC(\bv_n, \veca_n)$, an affine combination of two layer-$\ell$ vectors: the position's own input $\bv_n$ and the attention-selected $\veca_n = \bv_{\tau(B_n)}$, which is just a verbatim copy of whichever existing layer-$\ell$ vector $\tau$ picks. The score function $\score(\bv_n, \cdot)$---including the dot product---does enter the picture and a single score value would have bit-length \emph{quadratic} in its inputs. But the score is used only to rank positions; the argmax reduces it to a position index, and that index alone is used to fetch $\veca_n$. The score's value never becomes a coordinate of any forwarded vector, so its quadratic bit-length never enters the next layer's input.

\medskip\noindent
Combining the two parts: every value produced by $\trans$ on any input has the form $p / q$ with $p$ and $q$ each of bit-length polynomial in $|\trans|$.
\end{proof}

\subsection{Proof of Proposition \ref{prop:UHAT-LTL}}\label{app:UHAT-LTL}
\propUHATLTL*
\begin{proof}
\noindent\textbf{Setup.}\quad
Let $\trans$ be a UHAT recognizing a language $\Lang \subseteq \alphabet\kplus$ and let
$F$ be the set of binary representations of rational numbers that may occur during the computation of $\trans$, as in \zcref{lem:UHAT-values}.
For the $\ell^{\text{th}}$ layer of $\trans$ and every $\bv \in F^S$,
where $S$ is the output dimension of layer $\ell$,
we construct an LTL formula $\varphi_{\bv}^\ell$ such that, on input $\str{\bword} = \sym{a}_1 \cdots \sym{a}_N \in \alphabet\kplus$,
the $\ell^{\text{th}}$ layer outputs $\bv$ at position $n \in [N]$ if and only if
$\str{\bword}, n \models \varphi_{\bv}^\ell$.
We define $\varphi_{\bv}^\ell$ inductively on $\ell$.

\medskip\noindent\textbf{Base case ($\ell = 0$).}\quad
Let $\emb \colon \alphabet \to \Q^D$ be the symbol embedding of $\trans$, and for all $\bv \in F^D$ let
\begin{equation}
\varphi_{\bv}^0 \bnfdef \begin{cases}
\displaystyle \bigvee_{\sym{a} \in \emb^{-1}(\bv)} \Qprop{\sym{a}}, &\text{if } \emb^{-1}(\bv) \ne \emptyset\\
\bot, &\text{otherwise.}
\end{cases}
\end{equation}
We now define the formula for layer $\ell+1$, splitting on the type of layer.

\medskip\noindent\textbf{Inductive step---ReLU layer.}\quad
If layer $\ell+1$ is a ReLU layer of width $R$ applying ReLU to the $k^{\text{th}}$ coordinate, we set
\begin{equation}
\varphi_{\bv}^{\ell+1} \bnfdef \smashoperator[r]{\bigvee_{\substack{u \in F, \\ \max\{0,u\} = v_k}}} \varphi_{(\bv_{1:k-1},\, u,\, \bv_{k+1:R})}^\ell
\end{equation}
for all $\bv \in F^R$.

\medskip\noindent\textbf{Inductive step---attention with strict masking.}\quad
If layer $\ell+1$ is an attention layer with strict future masking and rightmost tie-breaking defined by an affine transformation
$\matC \colon \Q^{2R} \to \Q^S$ and a score function $\score \colon \Q^{2R} \to \Q^R$, we let
\begin{equation}
\varphi_{\bv}^{\ell+1} \bnfdef \smashoperator[r]{\bigvee_{\substack{\bu, \veca \in F^R, \\ \matC(\bu, \veca) = \bv}}} \varphi_{\bu}^\ell \wedge
\Big(\Big(\smashoperator[r]{\bigvee_{\substack{\vecb \in F^R, \\ \score(\bu, \vecb) < \score(\bu, \veca)}}} \varphi_{\vecb}^\ell\Big) \since
\Big(\varphi_{\veca}^\ell \wedge \neg\past \smashoperator[r]{\bigvee_{\substack{\vecb \in F^R, \\ \score(\bu, \vecb) > \score(\bu, \veca)}}}
\varphi_{\vecb}^\ell\Big)\Big)
\end{equation}
for all $\bv \in F^S$.
To account for the special case where the set of unmasked positions is empty, we take the disjunction of the previous formula with
\begin{equation}
(\neg\past\top) \wedge \smashoperator[r]{\bigvee_{\substack{\bu \in F^R, \\ \matC(\bu, \mathbf{0}) = \bv}}} \varphi_{\bu}^\ell.
\end{equation}
We omit this special case in what follows.
If the layer uses leftmost tie-breaking instead, we adapt the formula as follows:
\begin{equation}
\varphi_{\bv}^{\ell+1} \bnfdef \smashoperator[r]{\bigvee_{\substack{\bu, \veca \in F^R, \\ \matC(\bu, \veca) = \bv}}} \varphi_{\bu}^\ell \wedge
\Big(\past \Big(\varphi_{\veca}^\ell \wedge \neg\past \smashoperator[r]{\bigvee_{\substack{\vecb \in F^R, \\ \score(\bu, \vecb) \ge \score(\bu, \veca)}}}
\varphi_{\vecb}^\ell\Big)\Big) \wedge
\Big(\neg\past \smashoperator[r]{\bigvee_{\substack{\vecb \in F^R, \\ \score(\bu, \vecb) > \score(\bu, \veca)}}}
\varphi_{\vecb}^\ell\Big)
\end{equation}
The case of strict past masking is similar, with $\until$ in place of $\since$ and $\future$ in place of $\past$.

\medskip\noindent\textbf{Inductive step---attention without masking.}\quad
If the layer uses no masking and rightmost tie-breaking, we distinguish three cases according to where the attention vector lies relative to the current position: at the current position, strictly to the left, or strictly to the right.
When the attention vector is at the current position, we define $\varphi_{\bv,\text{at}}^{\ell+1}$ as
\begin{equation}\label{eq:no-masking1}
\varphi_{\bv,\text{at}}^{\ell+1} \bnfdef \smashoperator[r]{\bigvee_{\substack{\bu \in F^R, \\ \matC(\bu, \bu) = \bv}}} \varphi_{\bu}^\ell \wedge
\Big(\neg\past\smashoperator[r]{\bigvee_{\substack{\vecb \in F^R, \\ \score(\bu, \vecb) > \score(\bu, \bu)}}}
\varphi_{\vecb}^\ell\Big) \wedge
\Big(\neg\future\smashoperator[r]{\bigvee_{\substack{\vecb \in F^R, \\ \score(\bu, \vecb) \ge \score(\bu, \bu)}}}
\varphi_{\vecb}^\ell\Big).
\end{equation}
When the attention vector is strictly to the left of the current position, we define $\varphi_{\bv,\text{L}}^{\ell+1}$ as
\begin{equation}\label{eq:no-masking2}
\varphi_{\bv,\text{L}}^{\ell+1} \bnfdef \smashoperator[r]{\bigvee_{\substack{\bu, \veca \in F^R, \\ \matC(\bu, \veca) = \bv, \\ \score(\bu, \veca) > \score(\bu, \bu)}}} \varphi_{\bu}^\ell \wedge
\Big(\neg\future\smashoperator[r]{\bigvee_{\substack{\vecb \in F^R, \\ \score(\bu, \vecb) \ge \score(\bu, \veca)}}}
\varphi_{\vecb}^\ell\Big) \\ \wedge
\Big(\Big(\smashoperator[r]{\bigvee_{\substack{\vecb \in F^R, \\ \score(\bu, \vecb) < \score(\bu, \veca)}}} \varphi_{\vecb}^\ell\Big) \since
\Big(\varphi_{\veca}^\ell \wedge \neg\past \smashoperator[r]{\bigvee_{\substack{\vecb \in F^R, \\ \score(\bu, \vecb) > \score(\bu, \veca)}}}
\varphi_{\vecb}^\ell\Big)\Big).
\end{equation}
Similarly, when the attention vector is strictly to the right, we define $\varphi_{\bv,\text{R}}^{\ell+1}$ as
\begin{equation}\label{eq:no-masking3}
\varphi_{\bv,\text{R}}^{\ell+1} \bnfdef \smashoperator[r]{\bigvee_{\substack{\bu, \veca \in F^R, \\ \matC(\bu, \veca) = \bv, \\ \score(\bu, \veca) \ge \score(\bu, \bu)}}} \varphi_{\bu}^\ell \wedge
\Big(\neg\past\smashoperator[r]{\bigvee_{\substack{\vecb \in F^R, \\ \score(\bu, \vecb) > \score(\bu, \veca)}}}
\varphi_{\vecb}^\ell\Big) \wedge
\Big(\future \Big(\varphi_{\veca}^\ell \wedge \neg\future \smashoperator[r]{\bigvee_{\substack{\vecb \in F^R, \\ \score(\bu, \vecb) \ge \score(\bu, \veca)}}}
\varphi_{\vecb}^\ell\Big)\Big) \wedge
\Big(\neg\future \smashoperator[r]{\bigvee_{\substack{\vecb \in F^R, \\ \score(\bu, \vecb) > \score(\bu, \veca)}}}
\varphi_{\vecb}^\ell\Big).
\end{equation}
In the case of no masking and rightmost tie-breaking, we set
\begin{equation}
\varphi_{\bv}^{\ell+1} \bnfdef \varphi_{\bv,\text{at}}^{\ell+1} \vee \varphi_{\bv,\text{L}}^{\ell+1} \vee \varphi_{\bv,\text{R}}^{\ell+1}.
\end{equation}
The case of no masking and leftmost tie-breaking is analogous.

\medskip\noindent\textbf{Acceptance formula.}\quad
If $\trans$ has $m$ layers whose last layer outputs vectors of dimension $S$,
and $\bt \in \Q^S$ is its acceptance vector, we define
\begin{equation}
\varphi \bnfdef \smashoperator[r]{\bigvee_{\substack{\bv \in F^S, \\ \langle \bt, \bv \rangle > 0}}} \varphi_{\bv}^m.
\end{equation}
Then $\str{\bword}, N \models \varphi$ if and only if $\str{\bword} \in \Lang$. 

\medskip\noindent\textbf{Complexity.}\quad
It remains to argue that $\varphi$ can be computed in exponential time.
By \zcref{lem:UHAT-values}, $|F|$ is exponential in $|\trans|$
and every representation in $F$ has polynomial bit-length;
moreover, $F$ can be computed in exponential time.
At every layer $\ell+1$ of width $R$, the formula $\varphi_{\bv}^{\ell+1}$ depends on $|F|^{O(R)}$ formulas from layer $\ell$,
and can be computed in time polynomial in $|F|^R \cdot |\trans|$,
since we only have to evaluate affine transformations on vectors from $F^R$,
each of polynomial bit-length.
At the last layer $m$, the formulas $\varphi_{\bv}^{m}$ depend on $|F|^{O(R' m)}$ formulas from layer $0$,
where $R'$ is the maximum layer width.
Hence $\varphi_{\bv}^{m}$, and therefore $\varphi$, has size exponential in $|\trans|$ and is computable in exponential time.\looseness=-1
\end{proof}

\subsection{Proof of Proposition \ref{lem:ltl-uhat}}\label{app:LTL-UHAT}
\propLTLUHAT*
\begin{proof}[Proof sketch]
We construct $\trans$ by induction on the subformula structure of $\varphi$, maintaining the following invariant: after processing a subformula $\psi$, the UHAT built so far has, at every position $n$ of any input $\str{\bword}$, a designated output coordinate carrying the truth value of $\str{\bword}, n \models \psi$. Once the invariant is established for $\psi = \varphi$, the acceptance vector reads off that coordinate at the last position.

\textbf{Atomic formulas.} For $\varphi = \Qprop{\sym{a}}$, the token embedding already provides the indicator $\indicator{\sym{a}_n = \sym{a}}$ at each position, which is precisely the required truth value.

\textbf{Boolean combinations.} The truth value of $\neg\psi$, $\psi_1 \wedge \psi_2$, or $\psi_1 \vee \psi_2$ at position $n$ depends only on the children's truth values at the same position $n$, so a single affine-and-ReLU layer computes the required coordinate point-wise from coordinates produced by the inductive hypothesis.

\textbf{Since} ($\varphi = \varphi_1 \since \varphi_2$). By inductive hypothesis we have coordinates for $\varphi_1$ and $\varphi_2$ at every position; an affine-and-ReLU layer computes $\neg\varphi_1 \vee \varphi_2$ at every position.
We then use an attention layer with strict future masking and rightmost tie-breaking
to get for every position $i$ the maximal position $j < i$ where $\neg\varphi_1 \vee \varphi_2$ holds and
output at position $i$ the truth value of $\varphi_2$ from position $j$.
For positions $\ell$ and subformulas $\psi$, we write $\psi(\ell) \in \{0,1\}$ for the indicator of $\str{\bword}, \ell \models \psi$. By the maximality of $j$, every $k$ with $j < k < i$ satisfies $\varphi_1(k) \wedge \neg\varphi_2(k)$ (otherwise $k$ would be a more recent witness); in particular $\varphi_1(k)$ holds for all  $k \in (j,i)$. Thus if $\varphi_2(j) = 1$, then $j$ witnesses $\varphi_1 \since \varphi_2$ at $i$ under the semantics of \zcref{sec:prelim}, and if $\varphi_2(j) = 0$ then $\neg\varphi_1(j) = 1$ blocks any earlier $j' \le j$ from witnessing $\since$ at $i$. When no such $j$ exists, the attention layer's default value outputs $0$, which is again consistent with the semantics.

The case where $\varphi = \varphi_1 \until \varphi_2$ is similar using strict past masking and leftmost tie-breaking.
\end{proof}

\subsection{Proof of Theorem \ref{thm:UHAT-equiv}}\label{app:UHAT-equiv}
\thmUHATequiv*
\begin{proof}
To prove the lower bound, we reduce from the non-emptiness problem for UHATs,
which by \zcref{thm:UHAT-emptiness} is $\EXPSPACE$-complete.
To this end, let $\trans$ be a given UHAT and fix a UHAT $\trans_0$ that recognizes the empty language.
Then we have that $\trans$ and $\trans_0$ are equivalent if and only if $\trans$ recognizes the empty language.

To prove the upper bound, let the UHATs $\trans_1$ and $\trans_2$ be given.
We apply \zcref{prop:UHAT-LTL} to turn $\trans_1$ and $\trans_2$ in exponential time
into LTL formulas $\varphi_1$ and $\varphi_2$, respectively.
Now, $\trans_1$ and $\trans_2$ are equivalent if and only if $\varphi_1$ and $\varphi_2$ are equivalent.
The latter can be decided in polynomial space \citep{SistlaC85}, which results in an exponential-space algorithm in total.
\end{proof}

\end{document}